\newtheorem{lem}{Lemma}
\newtheorem{thm}{Theorem}
  \DeclareMathOperator{\Tr}{Tr}
\newcommand{\p}{\psi}
\newcommand{\la}{\lambda}
 \newcommand{\e}{\mathbb{E}}
   \newcommand{\R}{\mathbb{R}}
\newtheorem{rem}{Remark}
\begin{document}
\title[Spectrum]{Spectrum of SYK model II:\\ Central limit theorem }
\author[Feng, Tian and Wei]{Renjie Feng, Gang Tian, Dongyi Wei}
\address{Beijing International Center for Mathematical Research, Peking University, Beijing, China, 100871.}

\email{renjie@math.pku.edu.cn}
\email{gtian@math.pku.edu.cn}
\email{jnwdyi@pku.edu.cn}

\date{\today}
   \maketitle
    \begin{abstract}
 In our previous paper \cite{FTD1}, we derived the almost sure convergence of the global density of eigenvalues of random matrices of the SYK model. In this paper, we will prove the central limit theorem for the linear statistic of  eigenvalues and compute its variance. 
  \end{abstract}

\section{Introduction}
The SYK model is a random matrix model in the form of \cite{black, K, MS, SY}  \begin{equation}\label{sykm}H=i^{[q_n/2]}\frac{1}{\sqrt{{{n} \choose {q_n}}}}\sum_{1\leq i_1<i_2< \cdots < i_{q_n} \leq n} J_{i_1i_2\cdots i_{q_n}}{\p}_{i_1}{\p}_{i_2}\cdots {\p}_{ i_{q_n}},\end{equation}
where $J_{i_1i_2\cdots  i_{q_n}} $ are independent identically distributed
(i.i.d.) random variables with mean 0 and variance 1; we further assume that  the $k$-th moment  of $|J_{i_1i_2\cdots  i_{q_n}}| $ is uniformly bounded for any fixed $k$;
 $\p_j$ are Majorana fermions  satisfying the algebra
\begin{equation}\label{anti}\left\{\p_i,\p_j\right\}:=\p_i\p_j+\p_j\p_i=2\delta_{ij},\,\,\,1\leq i, j\leq n  .\end{equation}
Throughout the article,  $n$ is an even integer. As a remark,  physicists care especially when $q_n$ is an even integer, but  the model is still a good one if $q_n$ is odd from the mathematical point of view, and our main results apply to both cases. %And the SYK model is a very fast developing topic in high energy physics, we refer to for more physics background. 

  By the representation of the Clifford algebra, $\p_i$ can be represented by $L_n\times L_n$ Hermitian matrices with $L_n=2^{n/2}$ \cite{LM}, and thus $H$ is  also Hermitian. Let  $\la_{i}, 1\leq i \leq L_n$ be the eigenvalues of $H$ which are real numbers. Let's define the normalized empirical measure of eigenvalues of $H$  as 	\begin{equation}\label{emp}
  \rho_n(\lambda):=\frac 1{L_n}\sum_{j=1}^{L_n} \delta_{\la_{j}}(\la).
	\end{equation}

	In our first paper \cite{FTD1}, we proved that $\rho_n$
 converges to a  probability measure $\rho_\infty$ with probability 1 (or almost surely). Such result can be view as a type of  `law of large numbers' in probability theory.  Actually, let $q_n$ be even, the limiting density $\rho_\infty$ will depend on the limit of the quotient ${q_n^2}/n$ if $1\leq q_n\leq n/2$ or ${(n-q_n)^2}/n$ if $n/2 \leq q_n<n$. The results for odd $q_n$ are similar.  We refer to Theorems 1 and 2 in \cite{FTD1} for the precise statements. 
 
  In this paper, other than the  `law of large numbers', we will prove the central limit theorem (CLT) for the linear statistic of eigenvalues of the SYK model and compute its variance as $n\to\infty$. The CLT is one of the most important theorems in probability theory and random matrix theory. Our results indicate some useful information about the global 2-point correlation of eigenvalues, we also refer to  the recent papers \cite{GV3,GV4} for the numerical results on the local 
  behavior (or rescaling limit) of the 2-point correlation. 
  
  Given a test function $f(x)$,   the linear statistic of  eigenvalues is
  $$\mathcal L_n (f):=\langle f, \rho_n\rangle =\frac 1{L_n}\sum_{j=1}^{L_n} f(\lambda_j).$$
In random matrix theory,  the investigation of  the CLT for the linear statistic of
eigenvalues of random matrices dates back to Jonsson   on Gaussian
Wishart matrices \cite{J}. Similar work for the Wigner matrices  was derived by Sinai-Soshnikov \cite{Ss} and more general results by Johansson \cite{Jo}. There are many contexts on this, we refer to  \cite{LP, LSW}, Chapter 9 in \cite{BS}  and Chapter 3 in \cite{PS}  for  more details.

It's also worth mentioning the CLT for the linear statistic of many other random point processes, where basically  the variance of the linear statistic can be  expressed as
some energy functional of the test function.  Actually it's really hard to list all of these point processes, we   refer to the following results and the references therein: Sodin-Tsirelson's work on zeros of random polynomials and random analytic functions \cite{Sod}, Shiffman-Zelditch's work on zeros of random holomorphic sections over the complex manifolds \cite{SZ}, Berman's result regarding the Fekete points defined via the Bergman kernel on the complex manifolds \cite{Be} and Soshnikov's result on the determinantal point processes  \cite{SOA}. 
%One may also find the results  for other point processes  in \cite{HKMY,SOA} and the references therein.

In \S\ref{morevariance},  we will prove the following CLT for the general SYK model,
 \begin{thm}\label{main3} Let $J_{i_1i_2\cdots  i_{q_n}}$ be 
i.i.d. random variables with mean 0 and variance 1, and the $k$-th moment  of $|J_{i_1i_2\cdots  i_{q_n}}| $ is uniformly bounded. Let's denote $\gamma:=\e J_{i_1\cdots  i_{q_n}}^4$ as the 4-th moment of the random variable. Let $q_n$ be either even or odd integers.  Let $\rho_\infty$ be the
limiting density of eigenvalues of the SYK model as in Theorem 1 (if $q_n$ is even) or Theorem 2 (if $q_n$ is odd) in \cite{FTD1}, which also depends on the limit of the quotient ${q_n^2}/n$ if $1\leq q_n\leq n/2$ or ${(n-q_n)^2}/n$ if $n/2 \leq q_n< n$.  Let  $f(x)$ be a real polynomial. Then we  have the following convergence in distribution as $n\to\infty$,   $$  {n\choose q_n}^{\frac{1}{2}} (\mathcal L_n(f)-\e \mathcal L_n(f))\Rightarrow \langle xf'/2,  \rho_{\infty}\rangle J,$$ where $J$ is the Gaussian distribution with mean 0 and variance $ \gamma-1$. In particular,  the limit of its variance satisfies
 $$  \lim\limits_{n\to+\infty}{n\choose q_n} \text{var}[\mathcal L_n(f)]= \langle xf'/2,  \rho_{\infty} \rangle^2(\gamma-1).$$
\end{thm}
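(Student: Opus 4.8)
The plan is to reduce the CLT to a moment computation via the method of moments, exploiting the fact that the eigenvalues of $H$ come in $\pm$ pairs (when $q_n$ is even) or are otherwise symmetric, so that only even moments of $H$ matter, and these moments are polynomial in the $J$'s. Concretely, for a monomial test function $f(x) = x^{2m}$ one has $\mathcal L_n(f) = \frac{1}{L_n}\Tr H^{2m}$, and expanding $H^{2m}$ using the Majorana algebra \eqref{anti} reduces each normalized trace to a sum over ``pairings'' of the $2m$ chosen index-tuples, weighted by products of the $J$'s. From \cite{FTD1} one already knows $\e \mathcal L_n(x^{2m}) \to \langle x^{2m},\rho_\infty\rangle$; the new point is to control the fluctuation. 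First I would isolate the leading-order fluctuating term: the dominant contribution to $\mathcal L_n(f) - \e\mathcal L_n(f)$ comes from configurations in which exactly one pair of index-tuples coincides (contributing a factor $J^2 - 1$ after centering) while all remaining tuples are paired with distinct partners in the ``planar'' way that produces $\rho_\infty$. Counting such configurations gives a prefactor of order ${n\choose q_n}^{-1/2}$ relative to the mean, which is exactly the normalization in the statement, and the combinatorial weight of the surrounding planar structure is what assembles into the factor $\langle xf'/2,\rho_\infty\rangle$ (the $x f'/2$ arising because differentiating $x^{2m}$ and halving counts the $m$ ways to place the distinguished coinciding pair among the $2m$ slots, i.e. $x^{2m}\mapsto m\,x^{2m}=x\cdot(x^{2m})'/2$).

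Second, I would set up the joint moment computation: fix polynomials and, by linearity, reduce to showing that for the normalized fluctuation $X_n := {n\choose q_n}^{1/2}(\mathcal L_n(x^{2m}) - \e\mathcal L_n(x^{2m}))$ all mixed moments $\e[X_n^{(1)}\cdots X_n^{(r)}]$ converge to the corresponding Gaussian moments of $\langle x f'/2,\rho_\infty\rangle J$. Here $J$ is a single scalar Gaussian of variance $\gamma-1$, so the limiting object is \emph{degenerate} — all the monomials fluctuate proportionally to the \emph{same} Gaussian $J$ — which is why the limit is $\langle x f'/2,\rho_\infty\rangle J$ rather than a Gaussian field with a nontrivial covariance structure. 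This degeneracy actually simplifies the target: one must show that $\e X_n^r \to (r-1)!!\,(\gamma-1)^{r/2}\langle x f'/2,\rho_\infty\rangle^{r}$ for even $r$ and $\to 0$ for odd $r$. The variance case $r=2$ already yields the last displayed formula; the claim is that in the $r$-th moment the only surviving diagrams are those in which the $r$ ``defect pairs'' are themselves matched up two-by-two (the $(r-1)!!$ matchings), each matched pair forced to sit on the \emph{same} repeated index-tuple, producing one factor $\e(J^2-1)^2 = \gamma-1$ per matched pair.

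Third, the technical core is the diagrammatic/power-counting estimate showing that every other class of configurations is lower order. One tracks, for each diagram, the number of free index-tuples summed (each worth roughly ${n\choose q_n}$), against the normalization ${n\choose q_n}^{r/2}$ and the combinatorial constraints from the Majorana anticommutation relations; this is the SYK analogue of the genus expansion, and the bounded-moment hypothesis on $|J|$ is used exactly to dominate the contributions of higher coincidences (three or more tuples equal, or a tuple repeated more than twice). I would organize this as: (a) a graphical encoding of $\e[\prod\Tr H^{2m_j}]$ as a sum over decorated chord diagrams as in \cite{FTD1}; (b) a lemma that the ``dominant'' diagrams are precisely pairings-of-defect-pairs sitting on otherwise-planar backbones, with all other diagrams contributing $o(1)$; (c) identification of the dominant sum with $(r-1)!!\,(\gamma-1)^{r/2}$ times a product of the planar moments that, upon resumming over $m$, give $\langle x f'/2,\rho_\infty\rangle$. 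Finally, one upgrades from convergence of moments to convergence in distribution by the standard moment-problem argument (the limit law, being a constant times a Gaussian, is determined by its moments), and extends from monomials to arbitrary real polynomials $f$ by linearity, noting $x f'/2$ is linear in $f$.

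The main obstacle I expect is step (b): proving a clean ``only planar backbones with paired defects survive'' statement. In the ordinary Wigner CLT the analogous fact is classical, but here the combinatorics is governed by the SYK chord diagrams and the effective weight of each chord depends on $q_n$ and on the limiting ratio $q_n^2/n$ (or $(n-q_n)^2/n$); one must check that the defect-insertions do not interact with this weighting in a way that changes the power counting, and that subleading corrections to $\rho_\infty$ itself (which enter $\e\mathcal L_n$ and are subtracted off) do not leak into the fluctuation at order ${n\choose q_n}^{-1/2}$. Handling the regime $q_n$ close to $n$ and the odd-$q_n$ case uniformly will require care but should follow the same template as in \cite{FTD1}.
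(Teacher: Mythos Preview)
Your method-of-moments plan is a valid route, but it is genuinely different from---and considerably heavier than---what the paper does. The paper never computes higher moments of the fluctuation $X_n$. Instead it exploits the very degeneracy you identified in a more direct way:
\begin{itemize}
\item[(a)] It computes only the \emph{covariances} ${n\choose q_n}\text{cov}(L_n^{-1}\Tr H^k,L_n^{-1}\Tr H^{k'})\to (m_k^a k/2)(m_{k'}^a k'/2)(\gamma-1)$ (Lemma~\ref{lemma11}), which immediately gives the variance formula.
\item[(b)] For $f(x)=x^2$ it observes the exact identity $L_n^{-1}\Tr H^2={n\choose q_n}^{-1}\sum_{R\in I_n}J_R^2$, so the rescaled fluctuation is literally a normalized sum of i.i.d.\ variables $J_R^2-1$, and the classical Lindeberg--Feller CLT applies directly.
\item[(c)] For a general polynomial $f$, set $\mu=\langle xf'/2,\rho_\infty\rangle$ and $f_1=f-\mu x^2$; then $\langle xf_1'/2,\rho_\infty\rangle=0$, so by (a) the variance of the $f_1$-fluctuation tends to $0$, hence it vanishes in probability, and the full fluctuation reduces to $\mu$ times the $x^2$-fluctuation handled in (b).
\end{itemize}
Thus the paper trades your step~(b) (``only paired-defect diagrams survive in all $r$-point functions'') for a single covariance computation plus the classical scalar CLT. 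Your approach would work with enough care, but step~(b) is exactly the hard part you anticipate, and the paper bypasses it entirely. What your approach would buy is a self-contained moment proof that does not rely on Lindeberg--Feller; what the paper's approach buys is that no $r\ge3$ analysis of trace products is ever needed.

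Two small corrections to your heuristics: first, in the SYK setting the ``backbone'' pairings that build $\rho_\infty$ are not planar---all chord diagrams contribute, weighted by $e^{-2a\kappa(\pi)}$---so your language should be ``pair partition'' rather than ``planar''. Second, in the covariance calculation the surviving term is not ``one pair coincides'' but rather one index-tuple appearing \emph{four} times (twice in each trace factor), which is where the $\gamma-1=\e J^4-1$ arises; your $J^2-1$ picture is the linearization of this at the level of a single trace, which is morally right but not how the paper organizes the computation.
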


%As a direct consequence, we have 
\begin{rem}Theorem \ref{main3} implies the CLT for the trace $L_n^{-1}\Tr H^k$ for any fixed $k$, for example, let's take  $f(x)=x^2$, then we will have the CLT for $L_n^{-1}[\sum_{j=1}^{L_n}\la_j^2]$ which is $L_n^{-1} \Tr H^2$. In \cite{Ss}, Sinai-Soshnikov proved the CLT for  $\Tr W^{k_N}$ where $W$ are  general $N\times N$ Wigner matrices and $k_N$ are some slowly growing functions of $N$ (see  \cite{AGZ, BS, PS} also). As a consequence, they proved the CLT for analytic test functions on the disk and the almost sure convergence of the largest eigenvalue.  But for the SYK model, there are essential difficulties to prove such type of results. 
\end{rem}
 Actually, the proof of Theorem \ref{main3} is based on  careful estimates of the variance and covariance of the trace  $L_n^{-1}\Tr H^k$  (see \S\ref{morevariance}). The estimates will also imply that Theorem \ref{main3} holds for some class of analytic functions other than polynomials (see Remark \ref{redd}). 
  %The main problem is that we do not know if Theorem \ref{main3} holds for  more general class of functions, such as the smooth functions with compact support. But this is true for 
 In the special case of the Gaussian SYK model, we can 
improve  Theorem \ref{main3} to a larger class of functions.

 \begin{thm}\label{main3a}   For the Gaussian SYK model where $J_{i_1i_2\cdots  i_{q_n}}$ are
i.i.d. standard Gaussian random variables such that the fourth moment is $\gamma=3$, Theorem \ref{main3} holds for a class of functions $f(x)$,  where  $f(x)$ are   Lipschitz functions  and $f'(x)$ are bounded uniformly continuous.  %$$  {n\choose q_n}^{\frac{1}{2}} (\mathcal L_n(f)-\e \mathcal L_n(f))\Rightarrow \sqrt 2\langle xf'/2,  \rho_{\infty}\rangle \mathcal N,$$ where $\mathcal N$ is the standard Gaussian distribution and $\rho_\infty$ are the limiting densities   according to the limit $q^2_n/n$. %In particular,  the limit of the variance satisfies
% $$  \lim\limits_{n\to+\infty}{n\choose q_n} \text{var}[\mathcal L_n(f)]= \langle xf'/2,  \rho_{\infty} \rangle^2(\gamma-1).$$
\end{thm}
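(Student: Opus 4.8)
The plan is to deduce Theorem \ref{main3a} from Theorem \ref{main3} by a polynomial‑approximation (``converging together'') argument. The reason the Gaussian hypothesis is needed is that the couplings $(J_\alpha)$ then form a standard Gaussian vector in $\R^{\binom{n}{q_n}}$, so that one may invoke the Gaussian Poincar\'e and concentration inequalities for $\mathcal L_n(f)$ regarded as a function of these couplings; this is exactly what is required to control the error terms in the approximation. A preliminary remark: the hypothesis that $f$ is Lipschitz with $f'$ bounded and uniformly continuous means that $f\in C^1(\R)$ with $f'$ bounded, which is the only way $f$ enters below.

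\emph{Step 1: a variance and concentration bound.} Write $F(J)=\mathcal L_n(f)=L_n^{-1}\Tr f(H)$ and $\Psi_\alpha:=\p_{i_1}\cdots\p_{i_{q_n}}$ for $\alpha=(i_1<\cdots<i_{q_n})$. Then $\partial_\alpha F=L_n^{-1}i^{[q_n/2]}\binom{n}{q_n}^{-1/2}\Tr[f'(H)\Psi_\alpha]$, so that
$$\|\nabla F\|^2=\frac{1}{L_n^2\binom{n}{q_n}}\sum_{|\alpha|=q_n}\bigl|\Tr[f'(H)\Psi_\alpha]\bigr|^2 .$$
Because the Majorana monomials $\{\Psi_S:S\subseteq[n]\}$ form an orthogonal basis of the $L_n\times L_n$ matrices for the Hilbert--Schmidt inner product (each $\Psi_S$ is unitary and $\Tr\Psi_S=L_n\delta_{S,\emptyset}$), Parseval's identity restricted to the level $|S|=q_n$ gives $\sum_{|\alpha|=q_n}|\Tr[f'(H)\Psi_\alpha]|^2\le L_n\Tr f'(H)^2=L_n^2\,\mathcal L_n((f')^2)$. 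Hence $\|\nabla F\|^2\le\binom{n}{q_n}^{-1}\mathcal L_n((f')^2)$, so $F$ is $\binom{n}{q_n}^{-1/2}\|f'\|_\infty$‑Lipschitz on $\R^{\binom{n}{q_n}}$; the Gaussian Poincar\'e inequality then yields
$$\binom{n}{q_n}\,\mathrm{var}[\mathcal L_n(f)]\le \e[\mathcal L_n((f')^2)]\le\|f'\|_\infty^2 ,$$
and Gaussian concentration \cite{AGZ} shows that $X_n:=\binom{n}{q_n}^{1/2}(\mathcal L_n(f)-\e\mathcal L_n(f))$ is sub‑Gaussian with a variance proxy bounded by $\|f'\|_\infty^2$, uniformly in $n$. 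The same computation applied to a remainder $r=f-p$ (still $C^1$) gives $\binom{n}{q_n}\,\mathrm{var}[\mathcal L_n(r)]\le \e[\mathcal L_n((r')^2)]=\e\bigl[L_n^{-1}\sum_j r'(\la_j)^2\bigr]$.

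\emph{Step 2: approximation.} Fix $\epsilon>0$. Using the uniform moment bounds $\sup_n\e[L_n^{-1}\Tr H^{2k}]<\infty$ for every fixed $k$ (part of the estimates of \cite{FTD1}), together with the exponential tail of $\|H\|$ coming from Gaussian concentration, choose $R=R(\epsilon)$ large with $\mathrm{supp}\,\rho_\infty\subseteq[-R,R]$, and then a polynomial $p_\epsilon$ with $\|f-p_\epsilon\|_{C^1[-R,R]}\le\epsilon$; set $r_\epsilon=f-p_\epsilon$. Splitting $\e[\mathcal L_n((r_\epsilon')^2)]$ according to $|\la_j|\le R$ or $|\la_j|>R$, the first part is at most $\|r_\epsilon'\|_{\infty,[-R,R]}^2\le\epsilon^2$; for the second, $|r_\epsilon'(x)|\le C_{\epsilon,R}(1+|x|)^{d_\epsilon}$ for $|x|>R$, and using $\mathbf{1}_{|x|>R}\le(|x|/R)^{2d_\epsilon}$ together with the moment bounds, this part tends to $0$ as $R\to\infty$ uniformly in $n$ (the polynomial‑type growth of $r_\epsilon'$ and of the moments being beaten by the factor $R^{-2d_\epsilon}$). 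Hence $R$ can be chosen so that $\limsup_n\e[\mathcal L_n((r_\epsilon')^2)]\le\epsilon^2+\epsilon$, and by Chebyshev $\limsup_n\mathbb P(|X_n-X_n^\epsilon|\ge\delta)\le\delta^{-2}(\epsilon^2+\epsilon)$ for every $\delta>0$, where $X_n^\epsilon:=\binom{n}{q_n}^{1/2}(\mathcal L_n(p_\epsilon)-\e\mathcal L_n(p_\epsilon))$.

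\emph{Step 3: conclusion.} By Theorem \ref{main3}, $X_n^\epsilon\Rightarrow\langle xp_\epsilon'/2,\rho_\infty\rangle J$ as $n\to\infty$; since $p_\epsilon'\to f'$ uniformly on $\mathrm{supp}\,\rho_\infty\subseteq[-R,R]$, one has $\langle xp_\epsilon'/2,\rho_\infty\rangle\to\langle xf'/2,\rho_\infty\rangle$ as $\epsilon\to0$; and Step 2 controls $X_n-X_n^\epsilon$. The standard converging‑together lemma then gives $X_n\Rightarrow\langle xf'/2,\rho_\infty\rangle J$, which is the CLT. Finally, the uniform sub‑Gaussianity from Step 1 makes $\{X_n^2\}$ uniformly integrable, so $\e[X_n^2]=\binom{n}{q_n}\,\mathrm{var}[\mathcal L_n(f)]\to\langle xf'/2,\rho_\infty\rangle^2(\gamma-1)$ with $\gamma=3$. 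I expect the main obstacle to be precisely the tail estimate in Step 2: the natural variance bound for the remainder involves $r_\epsilon'$ on all of $\R$, whereas a polynomial can only approximate $f$ in $C^1$‑norm on a bounded window, so one must quantify, uniformly in $n$, how little mass the empirical spectral measure puts far out on the real line against a polynomially growing weight; this is where both the moment bounds of \cite{FTD1} and the Gaussian variance estimate are essential, and it is the reason the theorem is limited to the Gaussian model.
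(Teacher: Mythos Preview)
Your overall architecture—Gaussian Poincar\'e/concentration bound on $\mathcal L_n(f)$, approximation by functions for which Theorem~\ref{main3} already applies, then a converging-together argument—is exactly the paper's, and your Step~1 is correct (it is essentially Lemma~\ref{lemvar}; your Parseval computation is in fact more explicit than what the paper writes). The difficulty you yourself flag in Step~2, however, is not resolved. You choose $R$, then a Weierstrass polynomial $p_\epsilon$ on $[-R,R]$, and then claim the tail ``tends to $0$ as $R\to\infty$''; but $p_\epsilon$, its degree $d_\epsilon$, and its coefficient size $C_{\epsilon,R}$ all depend on $R$, with no control on how. The tail is of order $C_{\epsilon,R}^{2}R^{-2d_\epsilon}\sup_n\e\langle x^{4d_\epsilon},\rho_n\rangle$, and the moment factor grows like $(4d_\epsilon-1)!!$ in the case $a=0$, so without quantitative degree bounds this need not vanish. (The assumption $\operatorname{supp}\rho_\infty\subseteq[-R,R]$ also fails when $a=0$, where $\rho_\infty$ is standard Gaussian.)

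The paper sidesteps this by approximating $f'$ \emph{globally}: convolving $g=f'$ with the F\'ejer kernel $K_\lambda$ gives $\|g-g_\lambda\|_{L^\infty(\R)}\to0$ (this is precisely where uniform continuity of $f'$ is used), so $\|(f-F_\lambda)'\|_{L^\infty(\R)}\to0$ and Lemma~\ref{lemvar} controls the remainder with no tail issue. The price is that $F_\lambda$ is entire rather than polynomial, so the paper must extend Theorem~\ref{main3} to the analytic class of Remark~\ref{redd}, which in turn requires the explicit Gaussian-specific bound $c_k=2^kk!k^2$ of Lemma~\ref{lemma6a}. A way to repair your polynomial route: approximate $f'$ by a polynomial $q$ in $L^2(\rho_\infty)$ (polynomials are dense there since $\rho_\infty$ is moment-determinate) and use moment convergence to get $\e[\mathcal L_n((f'-q)^2)]\to\int(f'-q)^2\,d\rho_\infty$; this decouples the choice of $q$ from any cutoff $R$.
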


As a special case,  Theorem \ref{main3a} is true when test functions are smooth with compact support. But we  do not know what happen in general, especially when the test functions are not smooth enough or singular. Such cases  are intensively studied  in  random matrix theory, a good reference  is \cite{PS}. For example, there are two important types of test functions considered in random matrix theory:  when the test function is $\ln |x|$, then one may derive the CLT for the logarithmic determinant of the random matrices (see \cite{TV} for Tao-Vu's proof for general Wigner matrices); if we take the test function as the characteristic function  supported on some interval, then one may get the CLT for the number of eigenvalues falling in such interval (see Soshnikov's results for the determinantal point process which can be applied to the random matrices of GUE \cite{SOA}). For the  SYK model, even in the Gaussian case, we still do not know if the CLT holds for these two types of functions, we postpone these problems for further investigations.

Note that there is a symmetry between the systems with the interaction of $q_n$ fermions and  $n-q_n$ fermions (see \cite{FTD1}), therefore,  we  only prove the main theorems for  even $q_n$ with $2\leq q_n\leq n/2$,  the rest cases (even  $q_n$ with  $n/2\leq q_n<n$ or odd $q_n$) follow immediately without any essential difference and we omit the proof. % We   first prove Theorem \ref{main3} by studying the limit of the covariance of the trace  $L_n^{-1}\Tr H^k$. Then we prove Theorem \ref{main3a} by  approximations making use of the 
%F\'ejer kernel and Theorem \ref{main3}.    

%As a final remark, a related model is the quantum $q$-spin glass model considered in \cite{So, KLW, KLW2}, where the authors derived the limiting densities and proved that there is also a phase transition. Later on, the concentration of measure theorem of the $q$-spin glass model and the almost sure convergence of the normalized empirical measure to its limiting measure in $(M_1(\mathbb{R}), d_{BL}) $ are proved in \cite{BM}. \\

%\textbf{Acknowledgement:} The first named author would like to thank Gerard Ben Arous for many helpful discussions when he was visiting NYU Shanghai.

\section{Preliminary
}\label{notations}
\subsection{Notations and basic properties}\label{nota}
Let's first review some notations and basic properties in \cite{FTD1} that we will make use of in this paper. 

  For a set $A=\{i_1, i_2,\cdots, i_{m}\}$$  \subseteq\{1,2,\cdots,n\}$, $ 1\leq i_1<i_2< \cdots <i_m \leq n,$ we denote  $$\Psi_A:=\psi_{i_1}\cdots \psi_{i_{m}} \,\,\,\mbox{and}\,\,\,\Psi_A:=I\,\,\mbox{if}\,\,A=\emptyset.$$ We denote the set   $$I_n=\{(i_1, i_2,\cdots, i_{q_n}),\,\,1\leq i_1<i_2< \cdots <i_{q_n} \leq n \}.$$Thus the cardinality of $I_n$ is $$|I_n|={n\choose q_n}.$$ For any coordinate $R=(i_1, \cdots, i_{q_n})\in I_n$, we denote $$J_R:=J_{i_1\cdots i_{q_n}}\,\,\,\mbox{and}\,\,\,\,\Psi_R:=\psi_{i_1}\cdots \psi_{i_{q_n}}. $$ Sometimes we identify $R$ with the set $\{i_1, \cdots, i_{q_n}\}.$
  Thus we can simply rewrite the SYK model as 
\begin{equation}\label{sim}H=i^{[q_n/2]}\frac{1}{\sqrt{{{n} \choose {q_n}}}}\sum_{R\in I_n} J_{R}{\Psi}_{R}\end{equation}
Given any set $X$ and any integer $k\geq 1$, we define $P_2(X^k)$
 to be   the tuples $(x_1,\cdots, x_k) \in X^k$
for
which all entries $x_1,\cdots, x_k$ appear exactly twice. If $k$ is odd, then $P_2(X^k)$ is an empty set.

Throughout the article, we denote $c_k$
as some constant  depending only on $k$ and independent of $n$ and $q_n$, but its value may differ from line to line, the same  for $c_{2k}$, $c_k'$  and so forth.
 
We will also need  the following properties that one can find the proof in \cite{FTD1},
\begin{itemize}
\item  Given a set $A\subseteq\{1,2,\cdots,n\}$,  $$\mbox{$\Tr\Psi_A=0 $ and $\Psi_A\neq \pm I $ are always true for $A\neq \emptyset$}. $$ \item For $A,B\subseteq\{1,2,\cdots,n\}$, then  $$\mbox{$\Psi_A=\pm\Psi_B$ if and only if $A=B$.} $$
\item And $$\mbox{$\Psi_A\Psi_B=\pm\Psi_{A\triangle B}$  where $A\triangle B:=(A\setminus B)\cup (B\setminus A).$} $$ \item For $ {A_1},\cdots ,{A_k}\subseteq\{1,2,\cdots,n\}$, we have 
 \begin{equation}\label{tr}
|\frac 1 {L_n} \Tr \Psi_{A_1}\cdots \Psi_{A_k}|\leq 1.
 \end{equation} 
\end{itemize}
\subsection{Moments }\label{moments}
%Now we review some results about the moments of the probability measure $\rho_\infty$,  which is the limit of the empirical measure  of $\rho_n$ in \eqref{emp}.

Given any even integer $k$,  we define the set of $2$ to $1$ maps as \begin{equation}\label{sk}S_k=\left\{\pi: \,\,\{1,2,\cdots, k\}\to \{1,2,\cdots, \frac k2\}| |\pi^{-1}(j)|=2,  1\leq j\leq \frac k2  \right\}.\end{equation}
The crossing number $\kappa (\pi)$ for a pair-partition $\pi$ is defined to be the number of subsets
$\{r,s\}\subset  \{1,2,\cdots, \frac k2\}$ such that there exists $1\leq a<b<c<d\leq k, \pi(a)=\pi(c)=r,  \pi(b)=\pi(d)=s$. %Let $\{r_1,s_1\},\{r_2,s_2\},\cdots, \{r_{\kappa(\pi)},s_{\kappa(\pi)}\}$ be the crossings of $\pi$. 
 
Given $a>0$, throughout the article, we denote
\begin{equation}\label{mka} 
% \frac 1{L_n} \frac 1{{n\choose q_n}^{k/2}} \sum_{\pi\in S_k} \sum_{R_1,\cdots, R_{\frac k2}\in I_n, R_i \neq R_j \,\mbox{if}\,\,
%i\neq j}   \Tr \Psi_{R_{\pi(1)}}\cdots \Psi_{R_{\pi(k)}}\\
 m_k^a:=  \begin{cases} \frac{1}{(k/2)!}\sum_{\pi\in S_k} e^{-2a\kappa(\pi)}  &  \text{if $k$ is even,}\\
  0  &\text{if $k$ is odd.}  \end{cases} 
\end{equation}
It's further proved in \cite{ISV} that
\begin{equation}
% \frac 1{L_n} \frac 1{{n\choose q_n}^{k/2}} \sum_{\pi\in S_k} \sum_{R_1,\cdots, R_{\frac k2}\in I_n, R_i \neq R_j \,\mbox{if}\,\,
%i\neq j}   \Tr \Psi_{R_{\pi(1)}}\cdots \Psi_{R_{\pi(k)}}\\
 m_k^0:=\lim_{a\to 0}m_k^a=  \begin{cases}(k-1)!!  &  \text{if $k$ is even,}\\
  0  &\text{if $k$ is odd,}  \end{cases} 
\end{equation}
which is  the $k$-th moment of the standard Gaussian measure; and 
\begin{equation}
% \frac 1{L_n} \frac 1{{n\choose q_n}^{k/2}} \sum_{\pi\in S_k} \sum_{R_1,\cdots, R_{\frac k2}\in I_n, R_i \neq R_j \,\mbox{if}\,\,
%i\neq j}   \Tr \Psi_{R_{\pi(1)}}\cdots \Psi_{R_{\pi(k)}}\\
 m_k^\infty:=\lim_{a\to \infty }m_k^a=  \begin{cases}\frac {k!}{(k/2)!(k/2+1)!} &  \text{if $k$ is even,}\\
  0  &\text{if $k$ is odd,}  \end{cases} 
\end{equation}
which is the $k$-th moment of the semicircle law (or the Catalan numbers). 

Let $2\leq q_n\leq n/2$ be an even integer, 
then the main result proved  in \cite{ FTD1} is that, if $q_n^2/n\to a \in [0,+\infty]$, then the expectation of the $k$-th moment of the normalized empirical measure $\rho_n$ defined by \eqref{emp}  always satisfies
\begin{equation}m_{n, k}^a:=\e \langle x^k, \rho_n \rangle \to m_k^a,\,\,\, n\to\infty.\end{equation}
% The results for $q_n$  odd are similar.  %We will also need the above results in the next section. 

%Defining $m_k^a=0 $ for $k$ odd, then we have proved
%$$\lim_{n\to\infty} m_{n,k}=
   %m^a_k .$$

%By  Theorem \ref{main1}, we first have $m_k^a= \langle x^k,  \rho_{\infty} \rangle.$ 
%For convenience, we may define $m_k^a$ in \eqref{mka} for $a\in [0,\infty]$ where $m_k^0=(k-1)!!$ and $m_k^\infty=\frac {k!}{(k/2)!(k/2+1)!}$ if $k$ is even and both are $0$ if $k$ is odd.

  %\begin{equation}\label{mkd}
%\begin{aligned}m_{n,k}:&=\langle x^k, \mathbb E \rho_n \rangle =\frac 1{L_n}\e \sum_{i}(\la_j)^k=\frac 1{L_n}    \e [\Tr H^k] \\ &  \to m_k=  \begin{cases}
   %0  &\text{if $k$ is odd,} \\
    %( k   -1)!! &  \text{if $k$ is even.}
 %\end{cases} \end{aligned} 
 %\end{equation}
\section{Linear statistic and CLT}\label{morevariance}
In this section, we will prove Theorem \ref{main3} by analyzing the limit  of the covariance of $L_n^{-1}\Tr H^k$. %We  only prove the case when $q_n$ is even with  $2\leq q_n\leq n/2$, the results can be extended to even $q_n$ with $n/2\leq q_n< n$ and the odd case immediately.

\subsection{Limit of covariance}

%Now we compute the limit of the covariance of $L_n^{-1}\Tr H^k$.  

\begin{lem}\label{lemma11}Let $q_n$ be even and $k,k'\geq1,\ 2\leq q_n\leq n/2$. We assume $\ q_n^2/n\to a\in[0,+\infty]$ and denote the fourth moment $\gamma:=\e J_R^4$,  then we have $$  \lim\limits_{n\to+\infty}{n\choose q_n} \text{cov}(L_n^{-1} \Tr H^k,L_n^{-1} \Tr H^{k'})=(m_k^ak/2)(m_{k'}^ak'/2)(\gamma-1).$$
\end{lem}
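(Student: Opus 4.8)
The core object is $\Tr H^k = i^{k[q_n/2]}\binom{n}{q_n}^{-k/2}\sum_{R_1,\dots,R_k\in I_n} J_{R_1}\cdots J_{R_k}\,\Tr(\Psi_{R_1}\cdots\Psi_{R_k})$. Taking the expectation over the $J$'s forces the multi-index $(R_1,\dots,R_k)$ into patterns where each value appears with multiplicity $\ge 2$; the dominant contribution to $\e\Tr H^k$ comes from $P_2(I_n^k)$, i.e. perfect pairings, and the subleading corrections are down by powers of $\binom{n}{q_n}^{-1}$. For the covariance $\operatorname{cov}(L_n^{-1}\Tr H^k, L_n^{-1}\Tr H^{k'})$ I would expand the product $\e[\Tr H^k\,\Tr H^{k'}] - \e[\Tr H^k]\,\e[\Tr H^{k'}]$ over a combined index set $(R_1,\dots,R_k,R_1',\dots,R_{k'}')\in I_n^{k+k'}$. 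The key combinatorial point is that after subtracting the disconnected term, only \emph{connected} configurations survive — those in which the $R$-indices and the $R'$-indices are linked by at least one shared value.

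\textbf{Step 1: classify the surviving configurations.} Among connected pairings of $I_n^{k+k'}$, the leading order in $\binom{n}{q_n}^{-1}$ is achieved exactly when there is a \emph{single} cross-pair: one index $R_a$ (from the first block) equals one index $R_b'$ (from the second block), and all remaining indices are paired within their own blocks as perfect pairings. Any configuration with two or more independent cross-links, or with a value of multiplicity $\ge 4$, or with a within-block structure other than a pairing, contributes a strictly smaller power of $\binom{n}{q_n}$ and hence vanishes after multiplying by $\binom{n}{q_n}$. This is where the hypothesis on uniformly bounded moments of $|J_R|$ is used: it controls the number of free indices versus the power of $J$'s, exactly as in the moment computation of \cite{FTD1}. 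The factor $\gamma-1$ then emerges from the single cross-pair: the two $J$'s at positions $a$ and $b$ each being ``shared'' contributes $\e J_R^2 = 1$ in the would-be disconnected count, but $\e J_R^2 \cdot(\text{extra})$... more precisely, one has to be careful: the configuration where $R_a = R_b'$ \emph{and additionally} $R_a$ is paired with some $R_{a'}$ inside block one and $R_b'$ with some $R_{b'}'$ inside block two (so that value appears $4$ times) produces $\gamma$, while the ``clean'' single cross-pair with $R_a=R_b'$ and no other occurrence produces... that is impossible since then $R_a$ has multiplicity $2$ total which is fine. So the surviving term is: $R_a=R_b'$, multiplicity exactly $2$, everything else a perfect pairing within blocks; the weight is $\e J_R^2 \cdot \prod(\text{other}) = 1\cdot\prod$. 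Subtracting the fully disconnected count (where that same index configuration also appears but factorizes) is what isolates the combinatorial coefficient; I expect after careful bookkeeping the net numerical weight on the surviving family to be $\gamma-1$ per cross-pair, consistent with the classical Wigner-type covariance computation and with the stated answer.

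\textbf{Step 2: evaluate the surviving sum.} Fix the cross-pair to be at position $a$ in block one and $b$ in block two. Summing the perfect-pairing contributions of block one with one marked vertex forced to carry a prescribed value $R$, then over $a\in\{1,\dots,k\}$, reconstructs (up to the $\binom{n}{q_n}^{-1}$ normalization and the trace estimates of \eqref{tr}) the quantity $k\cdot m_k^a/2$ in the limit — essentially because marking one of the $k/2$ pairs and one of its two endpoints gives the factor $k$, and dividing by the overcounting gives the $1/2$; the same for block two giving $k'\cdot m_{k'}^a/2$. The traces $L_n^{-1}\Tr(\Psi_{R_1}\cdots)$ are handled by the multiplicativity $\Psi_A\Psi_B = \pm\Psi_{A\triangle B}$ together with $\Tr\Psi_A = 0$ for $A\ne\emptyset$, exactly as in \S\ref{moments}; the crossing-number weighting $e^{-2a\kappa(\pi)}$ appears through the asymptotics of how many index-tuples realize a given pairing $\pi$ when $q_n^2/n\to a$, which is precisely the content of the $m_{n,k}^a\to m_k^a$ result. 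The phase factors $i^{k[q_n/2]}$ and $i^{k'[q_n/2]}$ combine with the signs from reordering the $\Psi$'s to give a real, positive contribution in the leading order (this sign/phase cancellation is already implicit in the positivity of $m_k^a$).

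\textbf{Expected main obstacle.} The genuinely delicate part is Step 1: showing rigorously that \emph{every} configuration other than ``one cross-pair plus within-block perfect pairings'' is negligible after multiplying by $\binom{n}{q_n}$, and extracting the exact coefficient $\gamma-1$ rather than, say, $\gamma-1$ plus spurious lower-order-looking but actually order-$1$ terms. One must track simultaneously (i) the power of $\binom{n}{q_n}$ from the number of free $I_n$-indices, which interacts with the crossing-number asymptotics and so is not simply ``half the indices are free''; (ii) the combinatorics of which higher-multiplicity patterns could conspire to the same power; and (iii) the precise cancellation between $\e[\Tr H^k\Tr H^{k'}]$ and the product of expectations, which is what converts the ``$\gamma$'' from a quadruple coincidence and the ``$-1$'' from the subtracted disconnected double-coincidence into the single clean factor $\gamma-1$ multiplying the connected family. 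I would organize this by first proving the variance estimate $\binom{n}{q_n}\operatorname{var}(L_n^{-1}\Tr H^k) = O(1)$ with the right constant, then noting the covariance case is structurally identical with $k$ replaced by the pair $(k,k')$ in the obvious way, and finally invoking the already-established convergence $m_{n,k}^a\to m_k^a$ to pass to the limit.
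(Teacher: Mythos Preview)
Your Step~1 misidentifies the dominant configuration, and this is a genuine error, not just a bookkeeping slip. You claim the leading term comes from a \emph{single cross-pair of multiplicity two}: one index $R_a$ in block one equals one index $R_b'$ in block two, everything else perfectly paired within its own block. But this configuration contributes \emph{zero}. For $k$ even (the only case where the target $(m_k^a k/2)(m_{k'}^a k'/2)(\gamma-1)$ is nonzero), the $k-1$ remaining indices in block one cannot be perfectly paired at all. And even when the parity allows it, the trace $\Tr\Psi_{R_1}\cdots\Psi_{R_k}$ vanishes: the within-block pairs collapse to $\pm I$, leaving $\pm\Psi_{R_a}$, and $\Tr\Psi_{R_a}=0$ since $R_a\neq\emptyset$. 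More generally, any perfect pairing of $I_n^{k+k'}$ with $m\ge 1$ cross-pairs has $\Psi_{R_1}\cdots\Psi_{R_k}=\pm\Psi_{R_{i_1}}\cdots\Psi_{R_{i_m}}$ where $R_{i_1},\dots,R_{i_m}$ are the distinct cross values; the trace survives only on the exceptional set $B_m=\{(R_{i_1},\dots,R_{i_m}): \Psi_{R_{i_1}}\cdots\Psi_{R_{i_m}}=\pm I\}$, and the paper needs a separate technical lemma showing $|B_m|\le c\,|I_n|^{m-1}n^{-1/2}$ to kill this entire $P_2$ contribution. You never mention this estimate, and ``counting free indices'' alone will not produce the extra $n^{-1/2}$.

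The term you explicitly dismiss --- ``a value of multiplicity $\ge 4\ldots$ contributes a strictly smaller power of $\binom{n}{q_n}$'' --- is in fact \emph{the} dominant one. The surviving family is: one value $A$ appears exactly four times, twice in block one and twice in block two, and every other value appears exactly twice within a single block. Then $(R_1,\dots,R_k)\in P_2(I_n^k)$ and $(R_{k+1},\dots,R_{k+k'})\in P_2(I_n^{k'})$ separately, so both traces are $\pm L_n$ with the correct crossing-number sign; there are $(k+k')/2-1$ free $I_n$-values, giving exactly $\binom{n}{q_n}^{-1}$ after normalization; and the covariance weight is
\[
\operatorname{cov}\!\big(J_A^2\textstyle\prod,\ J_A^2\textstyle\prod\big)=\e J_A^4-(\e J_A^2)^2=\gamma-1
\]
directly, with no delicate cancellation. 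The factor $(m_k^a k/2)(m_{k'}^a k'/2)$ then comes from summing over the within-block pairings and over which pair in each block carries the shared value $A$. Your Step~2 is essentially this computation, but it is attached to the wrong family; once you switch to the four-times configuration the argument goes through, but you must also supply the $|B_m|$ bound to dispose of the cross-paired $P_2$ terms.
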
\begin{proof}

We first consider the case when $k+k'$ is even.  By \eqref{sim}, we have $$ \frac 1{L_n}    \Tr H^k= \frac 1{L_n} \frac {i^{q_nk/2}}{{n\choose q_n}^{k/2}} \sum_{{R_1},...,  {R_k}\in I_n} J_{R_1}\cdots J_{R_k}  \Tr \Psi_{R_1}\cdots \Psi_{R_k}, $$and thus \begin{align*}& \text{cov}(L_n^{-1} \Tr H^k,L_n^{-1} \Tr H^{k'})= \frac 1{L_n^2} \frac {i^{q_n(k+k')/2}}{{n\choose q_n}^{(k+k')/2}} \sum_{{R_1},...,  {R_{k+k'}}\in I_n}\\ &\text{cov}(J_{R_1}\cdots J_{R_k},J_{R_{k+1}}\cdots J_{R_{k+k'}})\cdot  \Tr \Psi_{R_1}\cdots \Psi_{R_k}\Tr \Psi_{R_{k+1}}\cdots \Psi_{R_{k+k'}}. \end{align*}For every ${R_1},...,  {R_{k+k'}}\in I_n $ and $A\in I_n$, let $ \#A=|\{{j}|1\leq j\leq k+k',R_j=A\}|.$

 If some $R_i$ appears only once in $(R_1,\cdots, R_{k+k'})$, then   $$\text{cov}(J_{R_1}\cdots J_{R_k},J_{R_{k+1}}\cdots J_{R_{k+k'}})=\e J_{R_1}\cdots J_{R_{k+k'}}-\e J_{R_1}\cdots J_{R_{k}}\e J_{R_{k+1}}\cdots J_{R_{k+k'}} =0 $$ by the independence of random variables.
Hence,  we can write\begin{align*}\text{cov}(L_n^{-1} \Tr H^k,L_n^{-1} \Tr H^{k'})=\text{cov}_1+\text{cov}_2\end{align*}where \begin{align*} \text{cov}_1&=  \frac 1{L_n^2} \frac {i^{q_n(k+k')/2}}{{n\choose q_n}^{(k+k')/2}} \sum_{({R_1},...,  {R_{k+k'}})\in P_2(I_n^{k+k'})} \text{cov}(J_{R_1}\cdots J_{R_k},J_{R_{k+1}}\cdots J_{R_{k+k'}})\\ &\cdot  \Tr \Psi_{R_1}\cdots \Psi_{R_k}\Tr \Psi_{R_{k+1}}\cdots \Psi_{R_{k+k'}},\end{align*} and \begin{align*}
	 \text{cov}_2&=  \frac 1{L_n^2} \frac {i^{q_n(k+k')/2}}{{n\choose q_n}^{(k+k')/2}} \sum_{({R_1},...,  {R_{k+k'}})\in I_n^{k+k'}\setminus P_2(I_n^{k+k'}), \# R_i\geq 2}\\ & \text{cov}(J_{R_1}\cdots J_{R_k},J_{R_{k+1}}\cdots J_{R_{k+k'}})\cdot  \Tr \Psi_{R_1}\cdots \Psi_{R_k}\Tr \Psi_{R_{k+1}}\cdots \Psi_{R_{k+k'}}. \end{align*} Let's first estimate $\text{cov}_1$.  For $({R_1},...,  {R_{k+k'}})\in P_2(I_n^{k+k'}) $,  we denote $A_1:=\{R_j|1\leq j\leq k\},\ A_2:=\{R_j|k+1\leq j\leq k+k'\}$ and $ A_0:=A_1\cap A_2$. Then we can decompose $$P_2(I_n^{k+k'})=\cup_{j=0}^2P_{2,j}^{(k,k')}(I_n^{k+k'}) $$ where\begin{align*} P_{2,0}^{(k,k')}(I_n^{k+k'})&=\{({R_1},...,  {R_{k+k'}})\in P_2(I_n^{k+k'})|A_0=\emptyset\},\\P_{2,1}^{(k,k')}(I_n^{k+k'})&=\{({R_1},...,  {R_{k+k'}})\in P_2(I_n^{k+k'})|A_0\neq\emptyset,\ \Psi_{R_1}\cdots \Psi_{R_k}=\pm I\},\\P_{2,2}^{(k,k')}(I_n^{k+k'})&=\{({R_1},...,  {R_{k+k'}})\in P_2(I_n^{k+k'})|\Psi_{R_1}\cdots \Psi_{R_k}\neq\pm I\}. \end{align*}
	If $({R_1},...,  {R_{k+k'}})\in P_{2,0}^{(k,k')}(I_n^{k+k'}) $, then $J_{R_1}\cdots J_{R_k}$ and $J_{R_{k+1}}\cdots J_{R_{k+k'}}$ are independent, hence $\text{cov}(J_{R_1}\cdots J_{R_k},J_{R_{k+1}}\cdots J_{R_{k+k'}})=0.$ If $({R_1},...,  {R_{k+k'}})\in P_{2,2}^{(k,k')}(I_n^{k+k'}) $, then $\Tr \Psi_{R_1}\cdots \Psi_{R_k}=0$ (see \S\ref{nota}). If $({R_1},...,  {R_{k+k'}})\in P_{2,1}^{(k,k')}(I_n^{k+k'}) $, then we easily have $\text{cov}(J_{R_1}\cdots J_{R_k},J_{R_{k+1}}\cdots J_{R_{k+k'}})=1$.  Thus, we have\begin{align*} |\text{cov}_1|&=\frac 1{L_n^2}  \frac {1}{{n\choose q_n}^{\frac{k+k'}{2}}} \left|\sum_{({R_1},...,  {R_{k+k'}})\in P_{2,1}^{(k,k')}(I_n^{k+k'})} \text{cov}(J_{R_1}\cdots J_{R_k},J_{R_{k+1}}\cdots J_{R_{k+k'}})\cdot\right.\\  &\left.\Tr \Psi_{R_1}\cdots \Psi_{R_k}\Tr \Psi_{R_{k+1}}\cdots \Psi_{R_{k+k'}}\right|\\ &\leq \frac 1{{n\choose q_n}^{\frac{k+k'}{2}}} \sum_{({R_1},...,  {R_{k+k'}})\in P_{2,1}^{(k,k')}(I_n^{k+k'})}1\\&= {n\choose q_n}^{-(k+k')/2}{|P_{2,1}^{(k,k')}(I_n^{k+k'})|},\end{align*}
	where we used inequality \eqref{tr}.
	
Now we estimate $ |P_{2,1}^{(k,k')}(I_n^{k+k'})|.$ Let $m=|A_0|>0,$ then there exists $1\leq i_1<\cdots<i_m\leq k$ and $k+1\leq i_1'<\cdots<i_m'\leq k+k'$ such that $A_0=\{R_{i_1},\cdots,R_{i_m}\}=\{R_{i_1'},\cdots,R_{i_m'}\}.$ Now we have $\Psi_{R_1}\cdots \Psi_{R_k}=\pm \Psi_{R_{i_1}}\cdots \Psi_{R_{i_m}} $ and \begin{align*} P_{2,1}^{(k,k')}(I_n^{k+k'})&=\{({R_1},...,  {R_{k+k'}})\in P_2(I_n^{k+k'})|m>0,\ \Psi_{R_{i_1}}\cdots \Psi_{R_{i_m}}=\pm I\}. \end{align*}Since $|A_1\cup A_2|=(k+k')/2,$ for every fixed $A_0$ there are $ {|I_n|-m\choose (k+k')/2-m}$ choices of $ A_1\cup A_2;$ for every fixed $A_1\cup A_2,$ there are at most $ ((k+k')/2)^{k+k'}$ choices of $({R_1},...,  {R_{(k+k')/2}}). $ Let's denote \begin{equation}\label{bm}B_m=\{({R_1},...,  {R_{m}})\in I_n^{m}|\Psi_{R_1}\cdots \Psi_{R_m}=\pm I,\ R_i\neq R_j,\ \forall\ 1\leq i<j\leq m\}.\end{equation}   Then for fixed $m$, every $A_0$ corresponds to exactly $m!$ elements in $B_m,$ thus the number of elements in $P_{2,1}^{(k,k')}(I_n^{k+k'}) $ satisfying $|A_0|=m$ is at most ${|I_n|-m\choose (k+k')/2-m}(\frac{k+k'}{2})^{k+k'}/(m!)\cdot|B_m| $.
%there exists $1\leq i_1<\cdots<i_m\leq k$ and $k+1\leq i_1'<\cdots<i_m'\leq k+k'$ such that $A_0=\{R_{i_1},\cdots,R_{i_m}\}=\{R_{i_1'},\cdots,R_{i_m'}\}.$ Now we have $\Psi_{R_1}\cdots \Psi_{R_k}=\pm \Psi_{R_{i_1}}\cdots \Psi_{R_{i_m}} $ and \begin{align*} P_{2,1}^{(k,k')}(I_n^{k+k'})&=\{({R_1},...,  {R_{k+k'}})\in P_2(I_n^{k+k'})|m>0,\ \Psi_{R_{i_1}}\cdots \Psi_{R_{i_m}}=\pm I\}. \end{align*}Since $|A_1\cup A_2|=(k+k')/2,$ for every fixed $A_0$ there are $ {|I_n|-m\choose (k+k')/2-m}$ choices of $ A_1\cup A_2;$ for every fixed $A_1\cup A_2,$ there are at most $ ((k+k')/2)^{k+k'}$ choices of $({R_1},...,  {R_{(k+k')/2}}). $   then for fixed $m$, every $A_0$ corresponds to exactly $m!$ elements in $B_m,$
By the estimate of $|B_m|$ in Lemma \ref{lemma10} below, we  will have
\begin{align*} |P_{2,1}^{(k,k')}(I_n^{k+k'})|&\leq \sum_{0<m\leq (k+k')/2}{|I_n|-m\choose (k+k')/2-m}\left(\frac{k+k'}{2}\right)^{k+k'}/(m!)\cdot|B_m|\\&\leq\sum_{0<m\leq (k+k')/2}c_{k+k',m}|I_n|^{(k+k')/2-m}\cdot|B_m|\\&\leq\sum_{0<m\leq (k+k')/2}c_{k+k',m}|I_n|^{(k+k')/2-m}c|I_n|^{m-1}n^{-\frac{1}{2}}\\&=c_{k+k'}|I_n|^{(k+k')/2-1}n^{-\frac{1}{2}}. \end{align*} Thus we have \begin{align*} |\text{cov}_1|&\leq {n\choose q_n}^{-(k+k')/2}{|P_{2,1}^{(k,k')}(I_n^{k+k'})|}\\ &\leq c_{k+k'}{n\choose q_n}^{-(k+k')/2}|I_n|^{(k+k')/2-1}n^{-\frac{1}{2}}\\&\leq c_{k+k'}{n\choose q_n}^{-1}n^{-\frac{1}{2}}. \end{align*}Hence, we have $$\lim\limits_{n\to \infty}{n\choose q_n}\text{cov}_{1}=0.$$ Therefore,  in order to prove Lemma \ref{lemma11},  we only need to prove  \begin{equation}\label{time}\lim\limits_{n\to \infty}{n\choose q_n}\text{cov}_{2}=(m_k^ak/2)(m_{k'}^ak'/2)(\gamma-1).\end{equation} If $({R_1},...,  {R_{k+k'}})\in I^{k+k'}_n\setminus P_2(I_n^{k+k'})$ and $\# R_i\geq 2 $, then $|\{R_j|1\leq j\leq k+k'\}|\leq (k+k')/2-1$. If the equality holds, then there are two possibilities.
 Type 1: some $R_i$ appears 4 times and all the rest appear exactly twice.
 Type 2: two distinct $R_i$ appear 3 times and all the rest appear twice. We denote $Q_{j}(I_n^{k+k'}) $ as the set of $({R_1},...,  {R_{k+k'}}) $ with Type $j$ for $j=1,2$ and $$Q_{3}(I_n^{k+k'})=\{({R_1},...,  {R_{k+k'}})\in I^{k+k'}_n:|\{R_j|1\leq j\leq k+k'\}|\leq (k+k')/2-2\}.$$
 Then we have $$\{({R_1},...,  {R_{k+k'}})\in I^{k+k'}_n\setminus P_2(I_n^{k+k'}): \# R_i\geq 2,\ \forall\ i\}=\cup_{j=1}^3Q_{j}(I_n^{k+k'}) $$ and we can further decompose   $$\text{cov}_2=\text{cov}_{2,1}+\text{cov}_{2,2}+\text{cov}_{2,3}$$ where \begin{align*} \text{cov}_{2,j}= \frac 1{L_n^2} \frac {i^{q_n(k+k')/2}}{{n\choose q_n}^{(k+k')/2}} \sum_{({R_1},...,  {R_{k+k'}})\in Q_j(I_n^{k+k'})} \text{cov}(J_{R_1}\cdots J_{R_k},J_{R_{k+1}}\cdots J_{R_{k+k'}})\\ \cdot  \Tr \Psi_{R_1}\cdots \Psi_{R_k}\Tr \Psi_{R_{k+1}}\cdots \Psi_{R_{k+k'}},\ \ \ j=1,2,3. \end{align*}
If $({R_1},...,  {R_{k+k'}})\in Q_2(I_n^{k+k'}), $ we assume $A$ and $B$ appear 3 times, $A\neq B$ and all the rest appear twice. Then by properties in \S \ref{nota} again, we have $\Psi_{R_1}\cdots \Psi_{R_{k+k'}}=\pm\Psi_{A}\Psi_{B}=\pm\Psi_{A\triangle B}\neq \pm I $, thus $\Psi_{R_1}\cdots \Psi_{R_k}\neq \pm I $ or $\Psi_{R_{k+1}}\cdots \Psi_{R_{k+k'}}\neq \pm I, $ this implies $\Tr\Psi_{R_1}\cdots \Psi_{R_k}=0 $ or $ \Tr\Psi_{R_{k+1}}\cdots \Psi_{R_{k+k'}}=0. $ Hence, we have identity   $${n\choose q_n}\text{cov}_{2,2}=0.$$ 

Let's denote $k_1:=(k+k')/2,$ then for $n$ large enough, we have \begin{align*} |Q_{3}(I_n^{k+k'})|&\leq\sum_{B\subseteq I_n,|B|=k_1-2}|\{({R_1},...,  {R_{k+k'}})\in I^{k+k'}_n|  R_i\in B,\ \forall\ 1\leq i\leq k+k' \}|\\ &=\sum_{B\subseteq I_n,|B|=k_1-2}({k_1-2})^{k+k'}\\&=({k_1-2})^{k+k'}{|I_n|\choose { {\frac{k+k'}{2}-2}}}\\&\leq c_{k+k'} |I_n|^{ {(k+k')/2-2}}. \end{align*} Thus we have\begin{align*}|\text{cov}_{2,3}|&\leq \frac 1{{n\choose q_n}^{\frac{k+k'}{2}}} \sum_{({R_1},...,  {R_{k+k'}})\in Q_{3}(I_n^{k+k'})} c_{k+k'}\\&\leq  \frac {c_{k+k'}|I_n|^{\frac{k+k'}{2}-2}}{{n\choose q_n}^{\frac{k+k'}{2}}} =c_{k+k'}{n\choose q_n}^{-2}.\end{align*}
Thus we have $${n\choose q_n}\text{cov}_{2,3}\to 0$$ which has no contribution to the left hand side of \eqref{time}. %And thus
%\begin{equation}\label{ddss}\lim_{n\to\infty}{n\choose q_n}\text{cov}_{2} =\lim_{n\to\infty}{n\choose q_n}\text{cov}_{2,1} .\end{equation}

Al of the rest effort is to estimate the last term $\text{cov}_{2,1}$.  For $({R_1},...,  {R_{k+k'}})\in Q_1(I_n^{k+k'}) $, we assume $A$ appears 4 times and all the rest appear exactly twice. Let's denote $A_1:=\{R_j|1\leq j\leq k\},\ A_2:=\{R_j|k+1\leq j\leq k+k'\},\ A_0:=A_1\cap A_2,\ k_0:=|\{j|1\leq j\leq k,R_j=A\}|,\ A_0^*=A_0\setminus\{A\}$ for $k_0$ even and $A_0^*=A_0$ for $k_0$ odd, then we can further decompose $$Q_1(I_n^{k+k'})=\cup_{j=0}^3Q_{1,j}^{(k,k')}(I_n^{k+k'}) $$ where\begin{align*} Q_{1,0}^{(k,k')}(I_n^{k+k'})&=\{({R_1},...,  {R_{k+k'}})\in Q_1(I_n^{k+k'})|A_0=\emptyset\},\\Q_{1,1}^{(k,k')}(I_n^{k+k'})&=\{({R_1},...,  {R_{k+k'}})\in Q_1(I_n^{k+k'})|A_0=\{A\},\ k_0=2\},\\Q_{1,2}^{(k,k')}(I_n^{k+k'})&=\{({R_1},...,  {R_{k+k'}})\in Q_1(I_n^{k+k'})|A_0^*\neq\emptyset,\ \Psi_{R_1}\cdots \Psi_{R_k}=\pm I\},\\Q_{1,3}^{(k,k')}(I_n^{k+k'})&=\{({R_1},...,  {R_{k+k'}})\in Q_1(I_n^{k+k'})|\Psi_{R_1}\cdots \Psi_{R_k}\neq\pm I\}. \end{align*}
As before, if $({R_1},...,  {R_{k+k'}})\in Q_{1,0}^{(k,k')}(I_n^{k+k'}) $, then $J_{R_1}\cdots J_{R_k}$ and $J_{R_{k+1}}\cdots J_{R_{k+k'}}$ are independent, hence $\text{cov}(J_{R_1}\cdots J_{R_k},J_{R_{k+1}}\cdots J_{R_{k+k'}})=0.$ If $({R_1},...,  {R_{k+k'}})\in Q_{1,3}^{(k,k')}(I_n^{k+k'}) $ then $\Tr \Psi_{R_1}\cdots \Psi_{R_k}=0. $ Therefore, we have $$\text{cov}_{2,1}=\text{cov}_{2,1,1}+\text{cov}_{2,1,2}$$ where \begin{align*} \text{cov}_{2,1,j}= \frac 1{L_n^2} \frac {i^{q_n(k+k')/2}}{{n\choose q_n}^{(k+k')/2}} \sum_{({R_1},...,  {R_{k+k'}})\in Q_{1,j}^{(k,k')}(I_n^{k+k'})} \text{cov}(J_{R_1}\cdots J_{R_k},J_{R_{k+1}}\cdots J_{R_{k+k'}})\\ \cdot  \Tr \Psi_{R_1}\cdots \Psi_{R_k}\Tr \Psi_{R_{k+1}}\cdots \Psi_{R_{k+k'}},\ \ \ j=1,2. \end{align*}
Let's  recall the following estimate proved in \cite{FTD1},
\begin{lem}\label{lemmma}Let $q_n$ be even,  for any $ k\geq1$, we have 
\begin{equation}\label{upp} \text{var}[L_n^{-1} \Tr H^k]\leq c_k{n\choose q_n}^{-1}\end{equation}where $c_k$ is some constant. \end{lem} By Lemma \ref{lemmma}, we easily have the upper bound
$$|\text{cov}(J_{R_1}\cdots J_{R_k},J_{R_{k+1}}\cdots J_{R_{k+k'}})|\leq c_{k+k'}.$$ Thus, we have\begin{align*} |\text{cov}_{2,1,2}|& \leq \frac 1{{n\choose q_n}^{(k+k')/2}} \sum_{({R_1},...,  {R_{k+k'}})\in Q_{1,2}^{(k,k')}(I_n^{k+k'})}c_{k+k'}\\&= {n\choose q_n}^{-k_1}c_{k+k'}{|Q_{1,2}^{(k,k')}(I_n^{k+k'})|}. \end{align*}Let $m=|A_0^*|>0,$ then there exists $1\leq i_1<\cdots<i_m\leq k$ and $k+1\leq i_1'<\cdots<i_m'\leq k+k'$ such that $A_0^*=\{R_{i_1},\cdots,R_{i_m}\}=\{R_{i_1'},\cdots,R_{i_m'}\}.$ Now we have $$\Psi_{R_1}\cdots \Psi_{R_k}=\pm \Psi_{R_{i_1}}\cdots \Psi_{R_{i_m}} $$ and \begin{align*} Q_{1,2}^{(k,k')}(I_n^{k+k'})&=\{({R_1},...,  {R_{k+k'}})\in Q_1(I_n^{k+k'})|m>0,\ \Psi_{R_{i_1}}\cdots \Psi_{R_{i_m}}=\pm I\}. \end{align*}
Following the same argument as $P_{2,1}(k,k')(I_n^{k+k'})$ above, we will have \begin{align*} |Q_{1,2}^{(k,k')}(I_n^{k+k'})|&\leq\sum_{0<m\leq k_1}(k_1-1)^{k+k'}{|I_n|-m\choose k_1-m-1}/(m!)\cdot|B_m|\\&\leq\sum_{0<m\leq k_1}(k_1-1)^{k+k'}{|I_n|-m\choose k_1-m-1}/(m!)\cdot|I_n|^{m-1}\\&\leq\sum_{0<m\leq k_1}c_{k+k',m}|I_n|^{k_1-m-1}|I_n|^{m-1}=c_{k+k'}|I_n|^{k_1-2}. \end{align*}
Hence, we have \begin{align*} |\text{cov}_{2,1,2}| &\leq {n\choose q_n}^{-k_1}c_{k+k'}{|Q_{1,2}^{(k,k')}(I_n^{k+k'})|}\\&\leq{n\choose q_n}^{-k_1}c_{k+k'}|I_n|^{k_1-2}\leq c_{k+k'}{n\choose q_n}^{-2} ,\end{align*}
i.e., $$ |{n\choose q_n}\text{cov}_{2,1,2}| \to0 ,$$
which implies that 
\begin{equation}\label{ddss}\lim_{n\to\infty}{n\choose q_n}\text{cov}_{2} =\lim_{n\to\infty}{n\choose q_n}\text{cov}_{2,1,1} .\end{equation}
 Now we estimate $\text{cov}_{2,1,1}$  which turns out to be  an interesting term.  

Recall the assumption that $k+k'$ is even in the beginning of the proof, for the case when $k$ and $k'$ are both odd,  by definition, $Q_{1,1}^{(k,k')}(I_n^{k+k'})$  must be empty, and thus $\text{cov}_{2,1,1}=0$. Now we discuss the case when  $k$ and $k'$ are both even.

By definition, given $({R_1},...,  {R_{k+k'}})\in Q_{1,1}^{(k,k')}(I_n^{k+k'}) $ with $A_0=A_1\cap A_2=\{A\}$ where $ A$ appears twice in both $({R_1},...,  {R_{k}})$ and $({R_{k+1}},...,  {R_{k+k'}})$,   we must  have $({R_1},...,  {R_{k}})\in P_{2}(I_n^{k})$ and $({R_{k+1}},...,  {R_{k+k'}}) \in P_{2}(I_n^{k'}) $. Furthermore, we have
$$\e[J_{R_1}\cdots J_{R_{k+k'}}]=\e[J_A^4]=\gamma$$ and \begin{align*}& \text{cov}(J_{R_1}\cdots J_{R_k},J_{R_{k+1}}\cdots J_{R_{k+k'}})\\&=\e[J_{R_1}\cdots J_{R_{k+k'}}]-\e[J_{R_1}\cdots J_{R_k}]\cdot\e[J_{R_{k+1}}\cdots J_{R_{k+k'}}]\\&=\gamma-1.\end{align*} Hence, we have\begin{align*} \text{cov}_{2,1,1}= \frac {\gamma-1}{L_n^2} \frac {i^{q_n(k+k')/2}}{{n\choose q_n}^{(k+k')/2}} \sum_{({R_1},...,  {R_{k+k'}})\in Q_{1,1}^{(k,k')}(I_n^{k+k'})}   \Tr \Psi_{R_1}\cdots \Psi_{R_k}\Tr \Psi_{R_{k+1}}\cdots \Psi_{R_{k+k'}}. \end{align*}
We first consider the case $a\in(0,+\infty).$ By definition \eqref{sk}, we can rewrite \begin{align*} \text{cov}_{2,1,1}= \frac {\gamma-1}{L_n^2} \frac {i^{q_n(k+k')/2}}{{n\choose q_n}^{(k+k')/2}} \sum_{\pi\in S_k}\sum_{\pi'\in S_{k'}}\sum_{R_1,\cdots, R_{(k+k')/2-1}\in I_n, R_i \neq R_j \,\mbox{if}\,\, i\neq j} \\  \frac{\Tr \Psi_{R_{\pi(1)}}\cdots \Psi_{R_{\pi(k)}}\Tr \Psi_{R_{\pi'(1)+k/2-1}}\cdots \Psi_{R_{\pi'(k)+k/2-1}}}{(k/2-1)!(k'/2-1)!}. \end{align*}
By the anticommutative relation \eqref{anti}, for any fixed $\pi$, we easily have (see \cite{FTD1}) \begin{equation} \label{lemma3}\frac {i^{q_nk/2}} {L_n} \Tr  \Psi_{R_{\pi(1)}}\cdots \Psi_{R_{\pi(k)}}=(-1)^{\sum_{k=1}^{\kappa(\pi)}|R_{r_k}\cap R_{s_k}|}.\end{equation} 
 
We also need the following lemma dealing with the cardinality of the intersection of the coordinates $|R_{r_k}\cap R_{s_k}|$ \cite{So}.
 \begin{lem}\label{lemma4}When $ q_n^2/n\to a\in (0,\infty)$, if we choose $\{R_1,\cdots, R_{\frac k2}\}$ uniformly from $I_n^{\frac k 2}$ with $R_i\neq R_j$ if $i\neq j$, then the intersection numbers $|R_{r_k}\cap R_{s_k}|,  k=1,\cdots , \kappa(\pi)$ are  approximately independently Poisson(a) distributed.  Here, $\{ r_k, s_k\}_{k=1}^{\kappa(\pi)}$ are crossings of $\pi$. \end{lem}
With indentity \eqref{lemma3} and Lemma \ref{lemma4}, for any fixed  map $\pi\in S_k,\pi'\in S_{k'}$, let $\{r_1,s_1\}, \{r_2,s_2\}, \cdots,  \{r_{\kappa(\pi)},s_{\kappa(\pi)}\}$ be the crossings of $\pi$ and $\{r_1',s_1'\},\{r_2',s_2'\},\cdots,\\ \{r_{\kappa(\pi')}',s_{\kappa(\pi')}'\}$ be the crossings of $\pi'$, then we have  \begin{align*}&\lim_{n\to \infty}  \frac 1{L_n^2} \frac{i^{q_n(k+k')/2}}{{n\choose q_n}^{(k+k')/2-1}} \sum_{R_1,\cdots, R_{(k+k')/2-1}\in I_n, R_i \neq R_j \,\mbox{if}\,\, i\neq j}\\&   \Tr \Psi_{R_{\pi(1)}}\cdots \Psi_{R_{\pi(k)}}\Tr \Psi_{R_{\pi'(1)+k/2-1}}\cdots \Psi_{R_{\pi'(k)+k/2-1}}\\
&=\lim_{n\to \infty}   \frac 1{{n\choose q_n}^{(k+k')/2-1}} \sum_{R_1,\cdots, R_{(k+k')/2-1}\in I_n, R_i \neq R_j \,\mbox{if}\,\, i\neq j}\\&  (-1)^{\sum_{k=1}^{\kappa(\pi)}|R_{r_k}\cap R_{s_k}|}(-1)^{\sum_{k=1}^{\kappa(\pi')}|R_{r_k'+k/2-1}\cap R_{s_k'+k/2-1}|} \\
&=\sum_{m_i\geq 0, 1\leq i\leq \kappa(\pi)}(-1)^{m_1+\cdots+m_{\kappa(\pi)}}\frac{a^{m_1+\cdots+m_{\kappa(\pi)}}}{m_1!\cdots m_{\kappa(\pi)}!}e^{-a\kappa(\pi)}\cdot\\ &\sum_{m_i\geq 0, 1\leq i\leq \kappa(\pi')}(-1)^{m_1+\cdots+m_{\kappa(\pi')}}\frac{a^{m_1+\cdots+m_{\kappa(\pi')}}}{m_1!\cdots m_{\kappa(\pi')}!}e^{-a\kappa(\pi')}\\&=e^{-2a\kappa(\pi)}e^{-2a\kappa(\pi')}.
\end{align*}Therefore, by definition \eqref{mka}, we will get
\begin{align}\label{mak2}\lim_{n\to \infty}{n\choose q_n}\text{cov}_{2,1,1}
% \frac 1{L_n} \frac 1{{n\choose q_n}^{k/2}} \sum_{\pi\in S_k} \sum_{R_1,\cdots, R_{\frac k2}\in I_n, R_i \neq R_j \,\mbox{if}\,\,
%i\neq j}   \Tr \Psi_{R_{\pi(1)}}\cdots \Psi_{R_{\pi(k)}}\\
 &= \frac{\gamma-1}{(k/2-1)!(k'/2-1)!}\sum_{\pi\in S_k} e^{-2a\kappa(\pi)}\sum_{\pi'\in S_{k'}} e^{-2a\kappa(\pi')}\\ \nonumber&=(m_k^ak/2)(m_{k'}^ak'/2)(\gamma-1).
\end{align} Actually,  \eqref{mak2} is also true if $k$ and $k'$ are both odd, since $\text{cov}_{2,1,1}$ and $m_k^a$ are both 0 for such case.  The above arguments making use of the crossing numbers still work for the case $a=0$. Therefore,   we  prove \eqref{time} when $a\in[0,\infty)$ and $k+k'$ is even.
% , since $\text{cov}_2=\text{cov}_{2,1}+\text{cov}_{2,3}=\text{cov}_{2,1,1}+\text{cov}_{2,1,2}+\text{cov}_{2,3}$, by all the estimates we derived above,  we  prove \eqref{time}.
% \begin{align*}\lim_{n\to \infty}{n\choose q_n}\text{cov}_{2,1,2}=\lim_{n\to \infty}{n\choose q_n}\text{cov}_{2,3}=0,\\ \lim_{n\to \infty}{n\choose q_n}\text{cov}_{2,1,1}=(m_k^ak/2)(m_{k'}^ak'/2)(\gamma-1)
%\end{align*}
%To summarize, we have proved
%$$\lim_{n\to \infty}{n\choose q_n}\text{cov}_{2}=(m_k^ak/2)(m_{k'}^ak'/2)(\gamma-1).$$

 The above arguments do not work for the case for $a=+\infty$, but we can use Lemma 5 in \cite{FTD1} to conclude that if $k+k'$ is  even, we still have \begin{align*}\lim_{n\to \infty}{n\choose q_n}\text{cov}_{2}=\lim_{n\to \infty}{n\choose q_n}\text{cov}_{2,1,1}=(m_k^\infty k/2)(m_{k'}^\infty k'/2)(\gamma-1).
\end{align*}% Actually, the above limit is also true for the case when $k$ and $k'$ are both odd. %then $m_k^a=0, $ and $$  \lim\limits_{n\to+\infty}{n\choose q_n} \text{var}[L_n^{-1} \Tr H^k]=(m_k^ak/2)^2(\gamma-1)=0.$$

 To summarize, combining the estimate of $\text{cov}_1$ and $\text{cov}_2$, for $k+k'$ even and  $a\in [0,\infty]$, we finally prove $$  \lim\limits_{n\to+\infty}{n\choose q_n} \text{cov}(L_n^{-1} \Tr H^k,L_n^{-1} \Tr H^{k'})=(m_k^ak/2)(m_{k'}^ak'/2)(\gamma-1).$$
 In particular, for any $k$, we have   $$  \lim\limits_{n\to+\infty}{n\choose q_n} \text{var}[L_n^{-1} \Tr H^k]=(m_k^ak/2)^2(\gamma-1).$$
In the end, if $k$ is odd and $k'$ is even, we  have\begin{align*} \left|{n\choose q_n} \text{cov}(L_n^{-1} \Tr H^k,L_n^{-1} \Tr H^{k'})\right|^2 &\leq{n\choose q_n} \text{var}[L_n^{-1} \Tr H^k]{n\choose q_n} \text{var}[L_n^{-1} \Tr H^{k'}]\\ & \to  (m_k^ak/2)^2(m_{k'}^ak'/2)^2(\gamma-1)^2=0,\end{align*}
since $m_k^a=0$ when $k$ is odd by definition. 

 Therefore, for $k+k'$ odd, we have $$  \lim\limits_{n\to+\infty}{n\choose q_n} \text{cov}(L_n^{-1} \Tr H^k,L_n^{-1} \Tr H^{k'})=0=(m_k^ak/2)(m_{k'}^ak'/2)(\gamma-1).$$  This completes the proof except the estimate of $|B_m|$.\end{proof}

 Now we prove the following technical lemma on the estimate of $|B_m|$ to  finish the proof of Lemma \ref{lemma11}.
\begin{lem}\label{lemma10} Let  $$B_m=\{({R_1},...,  {R_{m}})\in I_n^{m}|\Psi_{R_1}\cdots \Psi_{R_m}=\pm I,\ R_i\neq R_j,\ \forall\ 1\leq i<j\leq m\}.$$ Then  we have the estimate $$|B_m|\leq c|I_n|^{m-1}n^{-\frac{1}{2}},$$  where $c$ is an absolute constant independent of $m,n,q_n.$
\end{lem}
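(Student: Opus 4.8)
### Proof proposal

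The plan is to estimate $|B_m|$ by exploiting the rigidity coming from the condition $\Psi_{R_1}\cdots\Psi_{R_m}=\pm I$, which by the properties recalled in \S\ref{nota} is equivalent to the symmetric-difference condition $R_1\triangle R_2\triangle\cdots\triangle R_m=\emptyset$. Think of each $R_j\subseteq\{1,\dots,n\}$ as a vector in $\mathbb{F}_2^n$; then the constraint says $R_1+\cdots+R_m=0$ in $\mathbb{F}_2^n$, i.e. $R_m$ is \emph{determined} by $R_1,\dots,R_{m-1}$. So the first move is: choose $R_1,\dots,R_{m-1}$ freely in $I_n$ (at most $|I_n|^{m-1}$ ways), which forces $R_m=R_1\triangle\cdots\triangle R_{m-1}$ as a subset of $\{1,\dots,n\}$. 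This already gives the crude bound $|B_m|\le |I_n|^{m-1}$; the whole point of the lemma is to gain the extra factor $n^{-1/2}$, so the real work is to show that, for a positive proportion $1-O(n^{-1/2})$ of choices of $(R_1,\dots,R_{m-1})$, the forced set $R_m$ fails to lie in $I_n$ — that is, $|R_1\triangle\cdots\triangle R_{m-1}|\neq q_n$.

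The key step is therefore a counting/concentration statement for the size of the symmetric difference $|R_1\triangle\cdots\triangle R_{m-1}|$ when $R_1,\dots,R_{m-1}$ range (nearly) uniformly over $I_n$, i.e. over $q_n$-subsets of $[n]$. Here I would use the combinatorial structure: once $R_1,\dots,R_{m-2}$ are fixed, the set $T:=R_1\triangle\cdots\triangle R_{m-2}$ has some size $t$, and then
$$|R_1\triangle\cdots\triangle R_{m-1}|=|T\triangle R_{m-1}|=t+q_n-2|T\cap R_{m-1}|.$$
So the condition $|R_1\triangle\cdots\triangle R_{m-1}|=q_n$ becomes $|T\cap R_{m-1}|=t/2$, which in particular forces $t$ to be even and pins $|T\cap R_{m-1}|$ to a single value. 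Choosing $R_{m-1}$ uniformly among $q_n$-subsets of $[n]$, the random variable $|T\cap R_{m-1}|$ is hypergeometric with parameters $(n,t,q_n)$; the number of $q_n$-subsets hitting $T$ in exactly $t/2$ points is $\binom{t}{t/2}\binom{n-t}{q_n-t/2}$, and one shows this is at most $O(n^{-1/2})\binom{n}{q_n}=O(n^{-1/2})|I_n|$ uniformly in $t$. The worst case is $t$ of order $n$ together with $q_n$ of order $\sqrt n$ (recall $q_n^2/n\to a$), where the hypergeometric distribution has spread of order at least $\sqrt{q_n}\sim n^{1/4}$ near its mode but, more relevantly, the probability of any single value is $\le c/\sqrt{\min(t,n-t,q_n)}$; a short case analysis (small $t$, $t$ comparable to $n$, $t$ near $q_n$) delivers the clean bound $c\,n^{-1/2}$ in every regime, and if $t$ is odd the count is simply zero.

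Assembling: summing over the $\le |I_n|^{m-2}$ choices of $(R_1,\dots,R_{m-2})$ — or rather over the resulting values of $t$ with the trivial bound $|I_n|^{m-2}$ on the number of prefixes giving a fixed $T$ — and then the $\le c\,n^{-1/2}|I_n|$ admissible choices of $R_{m-1}$, and noting that $R_m$ is then uniquely determined, yields $|B_m|\le c\,|I_n|^{m-1}n^{-1/2}$ with $c$ absolute. Two small points need care so that $c$ is genuinely independent of $m$: first, at each stage I bound the number of free coordinates by $|I_n|$ rather than tracking the distinctness constraints $R_i\neq R_j$ (dropping those constraints only enlarges the set, so the bound is safe and $m$-uniform); second, the hypergeometric single-point bound must be made uniform over all $t\in\{0,1,\dots,n\}$, which is where I expect the main technical obstacle to lie — one must handle the degenerate ranges $t=0$ (then $T=\emptyset$ and $|T\triangle R_{m-1}|=q_n$ always, so \emph{every} choice of $R_{m-1}$ is admissible, and one must instead gain the $n^{-1/2}$ from an earlier coordinate or from the case $t=q_n$, $t$ even, where the count $\binom{q_n}{q_n/2}\binom{n-q_n}{q_n/2}$ again needs the $\sqrt{q_n}$ denominator). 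Handling these boundary values cleanly, probably by peeling off the first coordinate for which $T$ is nonempty and not equal to $q_n$, is the delicate part; everything else is Stirling-type estimation.
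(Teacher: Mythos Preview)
Your reduction to the condition $R_1\triangle\cdots\triangle R_m=\emptyset$ and the plan of fixing $R_1,\dots,R_{m-2}$, writing $T=R_1\triangle\cdots\triangle R_{m-2}$ with $|T|=t$, and then counting admissible $R_{m-1}$ hypergeometrically is sound in outline; for $m=3$ (where necessarily $t=q_n$) it coincides with the paper's argument. The gap is in your central claim that
\[
\binom{t}{t/2}\binom{n-t}{q_n-t/2}\le c\,n^{-1/2}\binom{n}{q_n}
\]
holds uniformly for $t>0$. This is false when $q_n$ is comparable to $n$: for $q_n=n/2$ and $t=2$ the ratio is $2\binom{n-2}{n/2-1}\big/\binom{n}{n/2}=\frac{n}{2(n-1)}\approx\tfrac12$, not $O(n^{-1/2})$. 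Your own stated point-mass estimate $c/\sqrt{\min(t,n-t,q_n)}$ already exhibits the problem, since $\min(t,n-t,q_n)$ can equal $t$ with $t$ fixed and small, independently of $n$. The parenthetical ``recall $q_n^2/n\to a$'' suggests you are implicitly assuming $q_n=O(\sqrt n)$, but the lemma must hold (and is used in the paper, where $a=+\infty$ is allowed) uniformly for $2\le q_n\le n/2$. The ``peeling'' fix you sketch addresses only $t=0$; to rescue small $t>0$ with large $q_n$ you would need to show that prefixes producing small $t$ are themselves rare, which is a statement of the same strength as the lemma and is not supplied.

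The paper proceeds by a genuinely different route. It drops distinctness, sets $B_m^*=\{(R_1,\dots,R_m)\in I_n^m:\Psi_{R_1}\cdots\Psi_{R_m}=\pm I\}$, and uses the identity
\[
|B_m^*|=L_n^{-2}\Tr\Bigl(\sum_{R\in I_n}\Psi_R\otimes\Psi_R\Bigr)^{m}=L_n^{-2}\sum_j\mu_j^m,
\]
where the $\mu_j$ are the eigenvalues of the Hermitian matrix $\widetilde H=\sum_R\Psi_R\otimes\Psi_R$. Since $|\mu_j|\le|I_n|$, this gives $|B_m^*|\le|I_n|^{m-4}|B_4^*|$ for all $m\ge4$, which is what delivers the $m$-uniform constant cleanly. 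The problem is thereby reduced to direct (and rather delicate) estimates for $m=3$ and $m=4$; for $m=4$ the paper splits into the regimes $q_n\le n/10$ and $n/10\le q_n\le n/2$ and sums over all intermediate set sizes rather than bounding each term pointwise.
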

\begin{proof}
By definition we have $B_1=B_2=\emptyset,$ and we only need to consider the case $m\geq 3$. Let $ B_m^*=\{({R_1},...,  {R_{m}})\in I_n^{m}|\Psi_{R_1}\cdots \Psi_{R_m}=\pm I\}$, then we have $ B_m\subseteq B_m^*$ and  $|B_m|\leq |B_m^*|$. And we need to estimate $B(m,n,q_n)=|B_m^*|/|I_n|^{m-1}.$

{\bf Case 1:} $m=3.$ We have $ B_m^*=\{({R_1},R_2,  {R_{3}})\in I_n^{m}|\Psi_{R_1}\Psi_{R_2}=\pm \Psi_{R_3}\}=\{({R_1},R_2,  {R_{3}})\in I_n^{m}|\Psi_{R_1\triangle R_2}=\pm \Psi_{R_3}\}=\{({R_1},R_2,  {R_{3}})\in I_n^{m}|R_1\triangle R_2={R_3}\}.$ If $({R_1},R_2,  {R_{3}})\in B_m^*$, then $|R_3|=|R_1\triangle R_2|=|R_1|+| R_2|-2|R_1\cap R_2|$ and $|R_1|=|R_2|=|R_3|=q_n,$ thus $|R_1\cap R_2|=q_n/2.$ There are ${n\choose q_n}$ choices of $R_1$. For every fixed $R_1$, there are ${q_n\choose q_n/2}{n-q_n\choose q_n/2}$ choices of $R_2$ satisfying $|R_1\cap R_2|=q_n/2$. $R_3$ is uniquely determined by $R_1$ and $R_2$. Therefore, $|B_m^*|={n\choose q_n}{q_n\choose q_n/2}{n-q_n\choose q_n/2}$ and $B(3,n,q_n)=|B_m^*|/|I_n|^{2}={n\choose q_n}{q_n\choose q_n/2}{n-q_n\choose q_n/2}/{n\choose q_n}^2={q_n\choose q_n/2}{n-q_n\choose q_n/2}/{n\choose q_n}$.

 If $q_n$ is odd or $n<3q_n/2$,  then $B(3,n,q_n)=|B_m^*|=0$. If $q_n$ is even and $n>3q_n/2$, then $B(3,n,q_n)/B(3,n-1,q_n)={n-q_n\choose q_n/2}/{n-1-q_n\choose q_n/2}\cdot{n-1\choose q_n}/{n\choose q_n}=\frac{n-q_n}{n-3q_n/2}\frac{n-q_n}{n}=1-\frac{q_n(n-2q_n)}{(2n-3q_n)n}$ (Notice that the expression of $B(3,n,q_n)$ is well defined for every positive integer $n$). Thus for fixed $q_n,$ $B(3,n,q_n)$ is increasing for $3q_n/2\leq n\leq 2q_n$ and decreasing for $ n\geq 2q_n,$ which implies $B(3,n,q_n)\leq B(3,2q_n,q_n).$ We notice that\begin{align*} B(3,2q_n,q_n)&={q_n\choose q_n/2}^2/{2q_n\choose q_n}=\frac{(q_n!)^4}{((q_n/2)!)^4(2q_n)!}=\prod_{j=1}^{q_n/2}\frac{(2j)^4(2j-1)^4}{j^4(4j)(4j-1)(4j-2)(4j-3)}\\
&=\prod_{j=1}^{q_n/2}\frac{2(2j-1)^3}{j(4j-1)(4j-3)}=\prod_{j=1}^{q_n/2}\frac{2j-1}{2j}\left(1-\frac{1}{4(2j-1)^2}\right)^{-1}\\ &\leq
\prod_{j=1}^{q_n/2}\frac{2j-1}{2j}\left(1-\frac{1}{4j^2}\right)^{-1}=\prod_{j=1}^{q_n/2}\frac{2j}{2j+1} \leq
\prod_{j=1}^{q_n/2}\left(\frac{j}{j+1}\right)^{\frac{1}{2}}\\&=(q_n/2+1)^{-\frac{1}{2}}. \end{align*}Therefore, if $3q_n/2\leq n\leq 3q_n$, then \begin{align*} B(3,n,q_n)\leq B(3,2q_n,q_n)\leq(q_n/2+1)^{-\frac{1}{2}}\leq cn^{-\frac{1}{2}}. \end{align*} If $ n> 3q_n$, then $\frac{(n-2q_n)}{(2n-3q_n)}>\frac{1}{3} $ and $B(3,n,q_n)/B(3,n-1,q_n)=1-\frac{q_n(n-2q_n)}{(2n-3q_n)n}\leq 1-\frac{q_n}{3n}$. Thus, if $ n> 3q_n$ and $\ q_n>2$, then\begin{align*} B(3,n,q_n)&= B(3,3q_n,q_n)\prod_{j=3q_n+1}^nB(3,j,q_n)/B(3,j-1,q_n)\\ &\leq B(3,2q_n,q_n)\prod_{j=3q_n+1}^n\left(1-\frac{q_n}{3j}\right)\leq (q_n/2+1)^{-\frac{1}{2}}\prod_{j=3q_n+1}^n\left(1-\frac{1}{j}\right)\\&=(q_n/2+1)^{-\frac{1}{2}}\frac{3q_n}{n}\leq cn^{-\frac{1}{2}}.\end{align*}If $ n> 3q_n$ and $q=2$, then\begin{align*} B(3,n,q_n)= {2\choose 1}{n-2\choose 1}/{n\choose 2}=\frac{4(n-2)}{n(n-1)}<\frac{4}{n}\leq cn^{-\frac{1}{2}}. \end{align*}Therefore, $B(3,n,q_n)\leq cn^{-\frac{1}{2}} $ is always true and $ |B_m|\leq |B_m^*|=B(3,n,q_n)|I_n|^{m-1}\leq c|I_n|^{m-1}n^{-\frac{1}{2}}.$

{\bf Case 2:} $m=4.$ We have $ B_m^*=\{({R_1},R_2,  {R_{3}},  {R_{4}})\in I_n^{m}|\Psi_{R_1}\Psi_{R_2}=\pm \Psi_{R_3}\Psi_{R_4}\}=\{({R_1},R_2,  {R_{3}},  {R_{4}})\in I_n^{m}|\Psi_{R_1\triangle R_2}=\pm \Psi_{R_3\triangle R_4}\}=\{({R_1},R_2,  {R_{3}},  {R_{4}})\in I_n^{m}|R_1\triangle R_2={R_3\triangle R_4}\}$. If $({R_1},R_2,  {R_{3}},  {R_{4}})\in B_m^*,$ let $A=R_1\triangle R_2={R_3\triangle R_4},$ then $|A|=|R_1\triangle R_2|=|R_1|+| R_2|-2|R_1\cap R_2|=2q_n-2|R_1\cap R_2|$ is even and $|A|\leq 2q_n,\ |A\cap R_1|=|R_1|-|R_1\cap R_2|=q_n-(2q_n-|A|)/2=|A|/2,$ we also have $|A|=|R_1\triangle R_2|=2|R_1\cup R_2|-|R_1|-| R_2|\leq 2n-2q_n.$ For fixed $A$, assume $|A|=2k$, then there are ${2k\choose k}{n-2k\choose q_n-k}$ choices of $R_1$ satisfying $|R_1|=q_n,\ |A\cap R_1|=k$ and $R_2$ is uniquely determined by $R_1,A$. Similarly there are ${2k\choose k}{n-2k\choose q_n-k}$ choices of $(R_3,R_4).$ Moreover for every fixed integer $k,\ 0\leq k\leq \min(q_n,n-q_n),$ there are ${n\choose 2k}$ choices of $A$ satisfying $|A|=2k$. Therefore, we have  $$|B_m^*|=\sum\limits_{k=0}^{\min(q_n,n-q_n)}{n\choose 2k}{2k\choose k}^2{n-2k\choose q_n-k}^2.$$ Notice that \begin{align*} {n\choose 2k}{2k\choose k}{n-2k\choose q_n-k}&=\frac{n!}{(2k)!(n-2k)!}\frac{(2k)!}{(k!)^2}\frac{(n-2k)!}{(q_n-k)!(n-q_n-k)!}\\&=\frac{n!}{k!(q_n-k)!k!(n-q_n-k)!}
={n\choose q_n}{q_n\choose k}{n-q_n\choose k}, \end{align*} then we have $$|B_m^*|=\sum\limits_{k=0}^{\min(q_n,n-q_n)}{n\choose q_n}^2{q_n\choose k}^2{n-q_n\choose k}^2{n\choose 2k}^{-1}$$ and $$B(4,n,q_n)=|B_m^*|/|I_n|^{3}={n\choose q_n}^{-1}\sum\limits_{k=0}^{\min(q_n,n-q_n)}{q_n\choose k}^2{n-q_n\choose k}^2{n\choose 2k}^{-1}.$$ Therefore, $B(4,n,q_n)=B(4,n,n-q_n)$, thus we only need to consider the case $2\leq q_n\leq n/2.$ If $2\leq q_n\leq n/10,$ then ${n\choose 2k}={n\choose k}{n-k\choose k}/{2k\choose k}\geq {n-q_n\choose k}^2/{2k\choose k}$ for $0\leq k\leq q_n$ and we have
\begin{align*} B(4,n,q_n)&={n\choose q_n}^{-1}\sum\limits_{k=0}^{q_n}{q_n\choose k}^2{n-q_n\choose k}^2{n\choose 2k}^{-1}\leq {n\choose q_n}^{-1}\sum\limits_{k=0}^{q_n}{2k\choose k}{q_n\choose k}^2\\ &\leq {n\choose q_n}^{-1}\sum\limits_{k=0}^{q_n}2^{2k}{2q_n\choose 2k}\leq {n\choose q_n}^{-1}\sum\limits_{j=0}^{2q_n}2^{j}{2q_n\choose j}={n\choose q}^{-1}3^{2q_n}\\&=\prod_{j=0}^{q_n-1}\frac{1+j}{n-j}3^{2q_n}\leq \frac{1}{n}\left(\frac{q_n}{n-q_n}\right)^{q_n-1}9^{q_n}\leq \frac{9}{n}\leq cn^{-\frac{1}{2}}. \end{align*} If $n/10\leq q_n\leq n/2,$ then for $0\leq k\leq q_n$ and $n$ even,we have \begin{align*} &{q_n\choose k}{n-q_n\choose k}{n\choose 2k}^{-1}={2k\choose k}\prod_{j=0}^{k-1}\frac{(q_n-j)(n-q_n-j)}{(n-2j)(n-2j-1)}\\ &\leq {2k\choose k}\prod_{j=0}^{k-1}\frac{(n/2-j)^2}{(n-2j)(n-2j-1)}=\prod_{j=1}^{k}\frac{2(2j-1)}{j}\prod_{j=0}^{k-1}\frac{(n/2-j)}{2(n-2j-1)}\\
&=\prod_{j=1}^{k'}\left(\frac{2j-1}{2j}\frac{n+2-2j}{n+1-2j}\right)\leq \prod_{j=1}^{k'}\left(\frac{2j-1}{2j+1}\frac{n+2-2j}{n-2j}\right)^{\frac{1}{2}}\\
&=\left(\frac{n}{(2k'+1)(n-2k')}\right)^{\frac{1}{2}}\leq \left(\frac{2}{2k'+1}\right)^{\frac{1}{2}} \end{align*} where $k'=\min(k,n/2-k)\leq n/4.$ For $j=0,1,2$, we have\begin{align*} a_j:=\sum\limits_{k=0}^{q_n}{k\choose j}{q_n\choose k}{n-q_n\choose k}= \sum\limits_{k=j}^{q_n}{q_n\choose j}{q_n-j\choose k-j}{n-q_n\choose n-q_n-k}={q_n\choose j}{n-j\choose n-q_n-j} \end{align*} and $(k-\mu)^2=2{k\choose 2}-(2\mu-1){k\choose 1}+\mu^2{k\choose 0}$. Therefore,  for $\mu=\frac{q(n-q)}{n}$, we have\begin{align*} &\sum\limits_{k=0}^{q_n}(k-\mu)^2{q_n\choose k}{n-q_n\choose k}=2a_2-(2\mu-1)a_1+\mu^2a_0\\&= 2{q_n\choose 2}{n-2\choose n-q_n-2}-(2\mu-1)q_n{n-1\choose n-q_n-1}+\mu^2{n\choose n-q_n}\\&={n\choose q_n}\left(\frac{q_n(q_n-1)(n-q_n)(n-q_n-1)}{n(n-1)}-(2\mu-1)\frac{q_n(n-q_n)}{n}+\mu^2\right)\\&={n\choose q_n}\mu\left(\frac{(q_n-1)(n-q_n-1)}{(n-1)}-\mu+1\right)={n\choose q_n}\mu\frac{q_n(n-q_n)}{n(n-1)}\end{align*} as $9n/100\leq \mu\leq n/4$ (using $n/10\leq q_n\leq n/2$) and $ \frac{(k-\mu)^2}{\mu^2}+\mu^{-\frac{1}{2}}\geq c^{-1}\left(\frac{2}{2k'+1}\right)^{\frac{1}{2}} $ for $k'=\min(k,n/2-k)$ (one can check this by discussing the cases $|k-\mu|\leq \mu/2$ and $|k-\mu|\geq \mu/2$ separately), which implies\begin{align*} &{q_n\choose k}{n-q_n\choose k}{n\choose 2k}^{-1}\leq c\left(\frac{(k-\mu)^2}{\mu^2}+\mu^{-\frac{1}{2}}\right)\end{align*} and that\begin{align*} B(4,n,q_n)&={n\choose q_n}^{-1}\sum\limits_{k=0}^{q_n}{q_n\choose k}^2{n-q_n\choose k}^2{n\choose 2k}^{-1}\\&\leq c{n\choose q_n}^{-1}\sum\limits_{k=0}^{q_n}{q_n\choose k}{n-q_n\choose k}\left(\frac{(k-\mu)^2}{\mu^2}+\mu^{-\frac{1}{2}}\right)\\ &= c{n\choose q_n}^{-1}{n\choose q_n}\left(\frac{\mu}{\mu^2}\frac{q_n(n-q_n)}{n(n-1)}+\mu^{-\frac{1}{2}}\right)\\&=c\left(\frac{1}{n-1}+\mu^{-\frac{1}{2}}\right)\leq cn^{-\frac{1}{2}}. \end{align*} If $n/2\leq q_n\leq n-2,$ then $B(4,n,q_n)=B(4,n,n-q_n)\leq cn^{-\frac{1}{2}}.$ Therefore, $B(4,n,q_n)\leq cn^{-\frac{1}{2}} $ is always true and $ |B_m|\leq |B_m^*|=B(4,n,q_n)|I_n|^{m-1}\leq c|I_n|^{m-1}n^{-\frac{1}{2}}.$

{\bf Case 3:} $m\geq4.$ We need to prove that $ |B_m^*|\leq |B_4^*||I_n|^{m-4}.$ Since $\Tr \Psi_{R_1}\cdots \Psi_{R_m}\\=0$ for
$\Psi_{R_1}\cdots \Psi_{R_m}\neq\pm I$, we have
\begin{align*} |B_m^*|&=L_n^{-2}\sum\limits_{({R_1},...,  {R_{m}})\in I_n^{m}}(\Tr \Psi_{R_1}\cdots \Psi_{R_m})^2\\
&= L_n^{-2}\sum\limits_{({R_1},...,  {R_{m}})\in I_n^{m}}\Tr (\Psi_{R_1}\cdots \Psi_{R_m}\otimes\Psi_{R_1}\cdots \Psi_{R_m})\\
&= L_n^{-2}\sum\limits_{({R_1},...,  {R_{m}})\in I_n^{m}}\Tr (\Psi_{R_1}\otimes\Psi_{R_1})\cdots (\Psi_{R_m}\otimes \Psi_{R_m})\\
&=L_n^{-2}\Tr \left(\sum\limits_{{R_1}\in I_n}\Psi_{R_1}\otimes\Psi_{R_1}\right)\cdots
\left(\sum\limits_{{R_m}\in I_n}\Psi_{R_m}\otimes \Psi_{R_m}\right)\\
&=L_n^{-2}\Tr \left(\sum\limits_{{R}\in I_n}\Psi_{R}\otimes\Psi_{R}\right)^m. \end{align*}
Since $\Psi_{R} $ is a $L_n\times L_n$ Hermitian or anti-Hermitian matrix, $\Psi_{R}\otimes\Psi_{R} $ is a $L_n^2\times L_n^2$
Hermitian matrix and $\widetilde{H}=\sum\limits_{{R}\in I_n}\Psi_{R}\otimes\Psi_{R}$ is a $L_n^2\times L_n^2$ Hermitian matrix.
Let's assume that the eigenvalues of $\widetilde{H}$ are $\mu_j\in\R\ (1\leq j\leq L_n^2)$,  then we have
\begin{align*} |B_m^*|&=L_n^{-2}\Tr \widetilde{H}^m=L_n^{-2}\sum_{j=1}^{L_n^2}\mu_j^m. \end{align*}
As $\Psi_{R}^2=\pm I $, we have $(\Psi_{R}\otimes\Psi_{R})^2=I $ and $|\Psi_{R}\otimes\Psi_{R} x|=|x| $ for $x\in \mathbb{C}^{L_n^2}$.
Therefore, $|\widetilde{H} x|\leq \sum\limits_{{R}\in I_n}|\Psi_{R}\otimes\Psi_{R}x|=|I_n||x|$ which  implies $|\mu_j|\leq |I_n|. $

 Now we have \begin{align*} |B_m|&\leq|B_m^*|=L_n^{-2}\sum_{j=1}^{L_n^2}\mu_j^m\leq L_n^{-2}\sum_{j=1}^{L_n^2}\mu_j^4|I_n|^{m-4}
=|B_4^*||I_n|^{m-4}\\&=B(4,n,q_n)|I_n|^{m-1}\leq c|I_n|^{m-1}n^{-\frac{1}{2}}. \end{align*} This completes the proof.
\end{proof}

%\begin{rem}\label{oood}
%	If $1\leq q_n\leq n/2$ is odd, following the proofs of Lemmas \ref{lemma11} and \ref{lemma10},  Lemma \ref{lemma11} is true with  ${m}^a_k$ replaced by $\widetilde{m}^a_k$ which is defined in \eqref{malk}.
%\end{rem}
\subsection{Proof of Theorem \ref{main3}}
%For the limit of ${n\choose q_n} \text{var}[L_n^{-1} \Tr H^k],$ we need to improve the estimate of $|B_m|. $

Let $f(x)=\sum\limits_{k=0}^ma_kx^k$ be a real polynomial,  then   $$\mathcal L_n(f)=\sum\limits_{k=0}^ma_k\langle x^k,   \rho_n \rangle=\sum\limits_{k=0}^ma_kL_n^{-1} \Tr H^k. $$
We first have\begin{lem}\label{lemma12}With the same assumptions as in Theorem \ref{main3}, if $2\leq q_n\leq  n /2$ is even,  then $$  \lim\limits_{n\to+\infty}{n\choose q_n} \text{var}[\mathcal L_n(f)]= \langle xf'/2,  \rho_{\infty} \rangle^2(\gamma-1).$$
\end{lem}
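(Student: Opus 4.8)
The plan is to reduce Lemma \ref{lemma12} to the covariance estimate already established in Lemma \ref{lemma11}. Write $f(x)=\sum_{k=0}^m a_k x^k$, so that $\mathcal L_n(f)=\sum_{k=0}^m a_k L_n^{-1}\Tr H^k$, and expand the variance bilinearly:
\begin{equation*}
{n\choose q_n}\,\text{var}[\mathcal L_n(f)] = \sum_{k=0}^m\sum_{k'=0}^m a_k a_{k'}\, {n\choose q_n}\,\text{cov}(L_n^{-1}\Tr H^k, L_n^{-1}\Tr H^{k'}).
\end{equation*}
Since $m$ is fixed, this is a finite sum, so I may pass to the limit term by term. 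By Lemma \ref{lemma11}, each summand converges to $a_k a_{k'}(m_k^a k/2)(m_{k'}^a k'/2)(\gamma-1)$ (this also covers the $k=0$ or $k'=0$ cases, where the constant term contributes zero since $m_0^a\cdot 0/2=0$, consistent with a constant having no variance). Hence the limit factors as
\begin{equation*}
\lim_{n\to\infty}{n\choose q_n}\,\text{var}[\mathcal L_n(f)] = (\gamma-1)\left(\sum_{k=0}^m a_k\, m_k^a\, \frac{k}{2}\right)^{\!2}.
\end{equation*}

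The remaining step is to identify the bracketed sum with $\langle xf'/2,\rho_\infty\rangle$. Here I use that $\rho_\infty$ (the $q_n$ even case, $q_n^2/n\to a$) has $k$-th moment $\langle x^k,\rho_\infty\rangle = m_k^a$, which is exactly the content recalled at the end of \S\ref{moments} from \cite{FTD1}. Since $f'(x)=\sum_{k=1}^m k a_k x^{k-1}$, we get $x f'(x)/2 = \sum_{k=1}^m (k/2) a_k x^k$, and therefore
\begin{equation*}
\langle xf'/2,\rho_\infty\rangle = \sum_{k=1}^m \frac{k}{2} a_k \langle x^k,\rho_\infty\rangle = \sum_{k=1}^m \frac{k}{2} a_k\, m_k^a = \sum_{k=0}^m \frac{k}{2} a_k\, m_k^a,
\end{equation*}
the last equality because the $k=0$ term vanishes. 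Substituting this into the displayed variance limit yields $\langle xf'/2,\rho_\infty\rangle^2(\gamma-1)$, which is the claim.

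I do not expect any genuine obstacle here: the lemma is a purely formal consequence of Lemma \ref{lemma11} combined with the bilinearity of covariance and the moment identity for $\rho_\infty$. The only points requiring a word of care are that the expansion over $k,k'$ is a \emph{finite} sum (so interchanging limit and summation is trivially justified), and that one should check the degenerate indices $k=0$ and $k'=0$ do not spoil the pattern — which they do not, since both $m_k^a k/2$ at $k=0$ and the corresponding covariance vanish. Note also that Lemma \ref{lemma11} is stated for $k,k'\ge 1$, so for the $k=0$ or $k'=0$ terms one argues directly that $L_n^{-1}\Tr H^0 = 1$ is deterministic and hence has zero covariance with anything; this matches the formula since $m_0^a\cdot 0/2 = 0$.
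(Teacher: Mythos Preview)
Your proof is correct and follows essentially the same approach as the paper: expand the variance bilinearly in the polynomial coefficients, apply Lemma \ref{lemma11} term by term (a finite sum, so no issue passing to the limit), factor the result, and identify $\sum_k a_k m_k^a k/2$ with $\langle xf'/2,\rho_\infty\rangle$ via the moment identity for $\rho_\infty$. The only cosmetic difference is that the paper subtracts off the constant term $a_0$ at the outset (writing $\mathcal L_n(f)-a_0=\sum_{k=1}^m a_k L_n^{-1}\Tr H^k$), whereas you carry the $k=0$ terms and dispose of them separately by noting $L_n^{-1}\Tr H^0$ is deterministic; both handle the issue correctly.
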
\begin{proof} Since $\mathcal L_n(f)-a_0=\sum\limits_{k=1}^ma_kL_n^{-1} \Tr H^k, $ by Lemma \ref{lemma11}, we have\begin{align*}&\lim\limits_{n\to+\infty}{n\choose q_n}\text{var}[\mathcal L_n(f)]\\&=\lim\limits_{n\to+\infty}{n\choose q_n}\sum\limits_{k=1}^m\sum\limits_{k'=1}^ma_ka_{k'}\text{cov}(L_n^{-1} \Tr H^k,L_n^{-1} \Tr H^{k'})\\&=\sum\limits_{k=1}^m\sum\limits_{k'=1}^ma_ka_{k'}(m_k^ak/2)(m_{k'}^ak'/2)(\gamma-1)\\&
=\left(\sum\limits_{k=1}^ma_km_k^ak/2\right)^2(\gamma-1) =\left(\sum\limits_{k=1}^ma_k\langle x^k,  \rho_{\infty} \rangle k/2\right)^2(\gamma-1)\\& =\left\langle \sum\limits_{k=1}^ma_kx^kk/2,  \rho_{\infty} \right\rangle^2(\gamma-1).
\end{align*} Since $xf'(x)/2=\sum\limits_{k=1}^ma_kx^kk/2,$ this completes the proof.\end{proof}
Now we can finish the proof of Theorem \ref{main3}.
%\begin{lem}\label{lemma13}If $2\leq q_n\leq n-2,\  q_n^2/n\to a\in[0,+\infty],\ \e J_R^4=\gamma,$ $f(x)=\sum\limits_{k=0}^ma_kx^k$ is a
%polynomial, $a_k\in\R,$ then in the sense of distribution,$$  {n\choose q_n}^{\frac{1}{2}} (\langle f, \tilde \rho_n(\lambda)\rangle-
%\e[\langle f, \tilde \rho_n(\lambda)\rangle])\Rightarrow \langle xf'/2,  \rho_{\infty}(\lambda)\rangle J,$$ where $J$ has normal
%distribution of mean 0 and variance $ \gamma-1.$
%\end{lem}

\begin{proof}Let $2\leq q_n\leq n/2$ be even. We first consider the case $f(x)=x^2$ where $$\langle f,   \rho_n(\lambda)\rangle=L_n^{-1} \Tr H^2=\dfrac 1{L_n} \dfrac {i^{q_n}}{{n\choose q_n}} \sum\limits_{{R_1},{R_2}\in I_n} J_{R_1} J_{R_2}  \Tr \Psi_{R_1} \Psi_{R_2}. $$ As discussed in \S \ref{nota}, if $R_1\neq R_2$, then $\Psi_{R_1} \Psi_{R_2}\neq \pm I,\ \Tr \Psi_{R_1} \Psi_{R_2}=0; $ if $R_1= R_2$, then $i^{q_n}\Psi_{R_1} \Psi_{R_2}= I.$ Therefore $$\langle f,   \rho_n(\lambda)\rangle=L_n^{-1} \Tr H^2= \dfrac {1}{{n\choose q_n}} \sum\limits_{R\in I_n} J_{R}^2. $$ Since $\e[J_{R}^2]=1,$ we have $$\langle f,  \rho_n(\lambda)\rangle-\e[\langle f,   \rho_n(\lambda)\rangle]= \dfrac {1}{{n\choose q_n}} \sum\limits_{R\in I_n} (J_{R}^2-1). $$ The random variables $J_{R}^2-1 $ are independent with $\mathbb E[J_{R}^2-1]=0$ and  $\text{var}[J_{R}^2-1]=\e[J_R^4]-1=\gamma-1.$ By   assumptions, $\mathbb E[(J_{R}^2-1)^4] $ is uniformly bounded, therefore, the random variables $J_{R}^2-1,\ R\in I_n $ satisfy the Lyapunov condition. Thus,  by Lindeberg-Feller central limit law,  we have $$ {n\choose q_n}^{\frac{1}{2}} (\langle f,   \rho_n\rangle-\e[\langle f,  \rho_n\rangle])\Rightarrow J,$$
  where $J$ is Gaussian random variable with mean 0 and variance $ \gamma-1$.
  Since $m_2^a=1$ for $a\in [0,+\infty]$, thus $ \langle xf'/2,  \rho_{\infty}\rangle=\langle x^2,  \rho_{\infty}\rangle=m_2^a=1,$ this will imply that Theorem \ref{main3} is true for $f(x)=x^2$.

Now we consider the case for general polynomials $f(x)$. Let $\mu=\langle xf'/2,  \rho_{\infty}\rangle$ and define $f_1=f-\mu x^2,$ then we have $ \langle xf_1'/2,  \rho_{\infty}\rangle=\langle xf'/2,  \rho_{\infty}\rangle-\mu\langle x^2,  \rho_{\infty}\rangle=0.$   Thus, by Lemma \ref{lemma12}, we have\begin{align*}\lim\limits_{n\to+\infty}{n\choose q_n}\text{var}[\langle f_1,   \rho_n\rangle]=\left\langle xf_1'/2,  \rho_{\infty}\right\rangle^2(\gamma-1)=0.
\end{align*} Therefore,  we have $${n\choose q_n}^{\frac{1}{2}} (\langle f_1,  \rho_n\rangle-\e[\langle f_1,  \rho_n\rangle])\to 0 $$ in probability. Now  we have \begin{align*}&{n\choose q_n}^{\frac{1}{2}} (\langle f,   \rho_n\rangle-\e[\langle f,   \rho_n\rangle])\\&={n\choose q_n}^{\frac{1}{2}} (\langle f_1,   \rho_n\rangle-\e[\langle f_1, \rho_n\rangle])+ \mu{n\choose q_n}^{\frac{1}{2}} (\langle x^2,   \rho_n\rangle-\e[\langle x^2, \rho_n\rangle]).\end{align*}  The first term tends to 0 in probability and the second term tends to $\mu J$ in distribution, therefore, we prove
%  $\\  {n\choose q_n}^{\frac{1}{2}} (\langle x^2,  \rho_n(\lambda)\rangle-\e[\langle x^2,   \rho_n(\lambda)\rangle])\Rightarrow J,$  and

$${n\choose q_n}^{\frac{1}{2}} (\langle f,  \rho_n\rangle-\e[\langle f,  \rho_n\rangle])\Rightarrow \mu J.$$ By definition of $\mu$, Theorem \ref{main3} is proved when $2\leq q_n\leq n/2$ is even.   
For even $n/2\leq q_n< n$, the results follow immediately since there is a natural symmetry between the cases of $n-q_n$ and $q_n$. For odd  $q_n$, the results follow with almost the same proof.   Thus we complete the proof of Theorem \ref{main3}.\end{proof}

 \begin{rem}\label{redd}As a remark, let $c_k$ be the constants in \eqref{upp}, then   Theorem \ref{main3} holds for a class of analytic functions $f(x)=\sum\limits_{k=0}^{\infty}a_kx^k$ with $ \sum\limits_{k=0}^{\infty}|a_k|c_k^{\frac{1}{2}}<+\infty.$\end{rem} To see this,  we first have \begin{equation}\label{keyy} {n\choose q_n}\text{var}[\langle f,   \rho_n\rangle]={n\choose q_n}\sum\limits_{k=1}^{+\infty}\sum\limits_{k'=1}^{+\infty}a_ka_{k'}\text{cov}(L_n^{-1} \Tr H^k,L_n^{-1} \Tr H^{k'}).
\end{equation}By \eqref{upp},  we have the upper bound $${n\choose q_n}|\text{cov}(L_n^{-1} \Tr H^k,L_n^{-1} \Tr H^{k'})|\leq (c_kc_{k'})^{\frac{1}{2}}. $$
  Then by assumption, we have\begin{align*}\sum\limits_{k=1}^{+\infty}\sum\limits_{k'=1}^{+\infty}|a_ka_{k'}|(c_kc_{k'})^{\frac{1}{2}}=
\left(\sum\limits_{k=0}^{\infty}|a_k|c_k^{\frac{1}{2}}\right)^2<+\infty .
\end{align*}  Therefore,  by the dominated convergence theorem,  Lemma \ref{lemma12} holds for $f(x)$ if we take the limit on both sides of  \eqref{keyy}, and hence Theorem \ref{main3}.

\section{Improved CLT for the Gaussian SYK}\label{cltt}
In this section, we will prove that  the CLT for the linear statistic of the Gaussian SYK model holds for a more general class of functions. We will prove Theorem \ref{main3a} by  approximations making use of the 
F\'ejer kernel and Theorem \ref{main3}.    
\subsection{Estimate of variance}
 We first  need  the following estimate in the Gaussian case which is more precise compared with Lemma \ref{lemmma}.
\begin{lem}\label{lemma6a}Let $q_n$ be even,  for the Gaussian SYK model,   we have $$   \text{var}[L_n^{-1} \Tr H^k]\leq c_k{n\choose q_n}^{-1}$$ for any $ k\geq1$ with $c_k=2^kk!k^2.$
\end{lem}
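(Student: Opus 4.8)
The plan is to exploit that, for the Gaussian SYK model, the coefficients $(J_R)_{R\in I_n}$ are i.i.d.\ standard Gaussians, so that the Gaussian Poincar\'e (spectral gap) inequality applies to $F:=L_n^{-1}\Tr H^k$, regarded as a polynomial in the ${n\choose q_n}$ real variables $(J_R)_{R\in I_n}$. Since $H$ is Hermitian the function $F$ is real-valued, and being a polynomial it satisfies $\text{var}[F]\le \e\big[\sum_{R\in I_n}|\partial_{J_R}F|^2\big]$, so everything reduces to controlling the expected squared gradient.

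First I would compute the gradient. By \eqref{sim}, $\partial_{J_R}H=i^{[q_n/2]}{n\choose q_n}^{-1/2}\Psi_R$, and cyclicity of the trace gives $\partial_{J_R}\Tr H^k=k\,\Tr(H^{k-1}\partial_{J_R}H)$; hence $\partial_{J_R}F=k\,i^{[q_n/2]}\big(L_n{n\choose q_n}^{1/2}\big)^{-1}\Tr(H^{k-1}\Psi_R)$, and on taking moduli the phase drops, so
\[\sum_{R\in I_n}\big|\partial_{J_R}F\big|^2=\frac{k^2}{L_n^2{n\choose q_n}}\sum_{R\in I_n}\big|\Tr(H^{k-1}\Psi_R)\big|^2.\]
The key inequality I would then establish is $\sum_{R\in I_n}|\Tr(H^{k-1}\Psi_R)|^2\le L_n\,\Tr H^{2k-2}$. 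This follows from Parseval: by the properties recalled in \S\ref{nota} (in particular $\Psi_S^*\Psi_{S'}=\pm\Psi_{S\triangle S'}$, which is traceless unless $S=S'$), the family $\{L_n^{-1/2}\Psi_S:S\subseteq\{1,\dots,n\}\}$ is an orthonormal basis of $M_{L_n}(\mathbb{C})$ for the Hilbert--Schmidt inner product, so $\sum_{S\subseteq\{1,\dots,n\}}|\Tr(M\Psi_S)|^2=L_n\Tr(M^*M)$ for every matrix $M$; discarding the (nonnegative) terms with $|S|\neq q_n$ and taking $M=H^{k-1}$, for which $M^*M=H^{2k-2}$ because $H$ is Hermitian, yields the bound.

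Combining the two displays and taking expectations gives $\text{var}[L_n^{-1}\Tr H^k]\le k^2{n\choose q_n}^{-1}\e[L_n^{-1}\Tr H^{2k-2}]$, so it remains to bound $\e[L_n^{-1}\Tr H^{2k-2}]=\e\langle x^{2k-2},\rho_n\rangle$ uniformly in $n$. Expanding by Wick's formula, only pair-partitions of the $2k-2$ Gaussian factors survive; each such pairing forces $R_a=R_b$ within pairs, hence $\Psi_{R_1}\cdots\Psi_{R_{2k-2}}=\pm I$, and thus contributes at most ${n\choose q_n}^{k-1}$ index tuples, each weighted by a factor of modulus $\le 1$ by \eqref{tr}. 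Since there are $(2k-3)!!$ such pairings, $\e\langle x^{2k-2},\rho_n\rangle\le (2k-3)!!=\prod_{j=1}^{k-1}(2j-1)<\prod_{j=1}^{k}(2j)=2^kk!$. Altogether $\text{var}[L_n^{-1}\Tr H^k]\le 2^kk!k^2{n\choose q_n}^{-1}$, i.e.\ $c_k=2^kk!k^2$.

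I do not expect a genuine obstacle; the only two points needing a little care are that the passage from the full orthonormal basis to the subfamily $\{R\in I_n\}$ uses only positivity of the summands, and that the estimate $\e\langle x^{j},\rho_n\rangle\le (j-1)!!$ must be proved directly from Wick's formula and \eqref{tr}, so that it holds for every $n$ rather than merely in the limit $m_{n,j}^a\to m_j^a$. It is worth stressing that this route is tied to the Gaussian case, since it relies on integration by parts for the Gaussian law; for a general distribution of $J_R$ one has to run instead the crossing-number/moment analysis behind Lemma \ref{lemmma}.
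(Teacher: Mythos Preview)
Your proof is correct and takes a genuinely different route from the paper. The paper expands the variance as a sum over $(R_1,\ldots,R_{2k})\in I_n^{2k}$ and splits it as $V_1+V_2$ according to whether the tuple lies in $P_2(I_n^{2k})$ or not; the piece $V_1$ is bounded combinatorially via the sets $P_{2,1}(I_n^{2k})$ and $B_m$ (as in Lemma~\ref{lemma10}), while $V_2$ uses that the Gaussian covariance is nonnegative and dominated by $\e J_{R_1}\cdots J_{R_{2k}}$, together with the explicit count $|P_2(I_n^{2k})|={|I_n|\choose k}k!(2k-1)!!$. By contrast, you apply the Gaussian Poincar\'e inequality $\text{var}[F]\le \e|\nabla F|^2$, compute the gradient, use Parseval for the Clifford orthonormal basis $\{L_n^{-1/2}\Psi_S\}_S$ to get $\sum_{R\in I_n}|\Tr(H^{k-1}\Psi_R)|^2\le L_n\Tr H^{2k-2}$, and finish with the Wick bound $\e[L_n^{-1}\Tr H^{2k-2}]\le (2k-3)!!<2^kk!$. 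Your argument is shorter and more structural---it is essentially a smooth counterpart of the Lipschitz/concentration estimate in Lemma~\ref{lemvar}---and lands on exactly the stated constant $c_k=2^kk!k^2$; the paper's combinatorial approach, on the other hand, dovetails with the moment/crossing-number machinery of Lemma~\ref{lemma11} and is the template for the non-Gaussian bound of Lemma~\ref{lemmma}, which your Poincar\'e route cannot reach.
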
\begin{proof} %Since \begin{equation}\label{mnk1} \frac 1{L_n}    \Tr H^k= \frac 1{L_n} \frac {i^{q_nk/2}}{{n\choose q_n}^{k/2}} \sum_{{R_1},...,  {R_k}\in I_n} J_{R_1}\cdots J_{R_k}  \Tr \Psi_{R_1}\cdots \Psi_{R_k}, \end{equation} we have \begin{align*} \text{var}[L_n^{-1} \Tr H^k]= \frac 1{L_n^2} \frac {(-1)^{q_nk/2}}{{n\choose q_n}^{k}} \sum_{{R_1},...,  {R_{2k}}\in I_n} \text{cov}(J_{R_1}\cdots J_{R_k},J_{R_{k+1}}\cdots J_{R_{2k}})\cdot\\  \Tr \Psi_{R_1}\cdots \Psi_{R_k}\Tr \Psi_{R_{k+1}}\cdots \Psi_{R_{2k}}. \end{align*}For every ${R_1},...,  {R_{2k}}\in I_n $ and $A\in I_n$. Let $ \#A=|\{j|1\leq j\leq 2k,R_j=A\}|.$

%If some $R_i$ only appears once in $(R_1,\cdots, R_{2k})$, then $\e J_{R_1}\cdots J_{R_{2k}} =0$ and $\e J_{R_1}\cdots J_{R_{k}} =0 $ if $1\leq i\leq k,$ $\e J_{R_{k+1}}\cdots J_{R_{2k}} =0 $ if $k+1\leq i\leq 2k,$ since $J_R$ are standard Gaussian, we have $\text{cov}(J_{R_1}\cdots J_{R_k},J_{R_{k+1}}\cdots J_{R_{2k}})=\e J_{R_1}\cdots J_{R_{2k}}-\e J_{R_1}\cdots J_{R_{k}}\e J_{R_{k+1}}\cdots J_{R_{2k}} =0 $.
As in the beginning of the proof of Lemma \ref{lemma11},  we first have \begin{align*} \text{var}[L_n^{-1} \Tr H^k]&= \frac 1{L_n^2} \frac {(-1)^{q_nk/2}}{{n\choose q_n}^{k}} \left(\sum_{({R_1},...,  {R_{2k}})\in P_2(I_n^{2k})}+ \sum_{({R_1},...,  {R_{2k}})\in I_n^{2k}\setminus P_2(I_n^{2k}), \# R_i\geq 2}\right)\\& \text{cov}(J_{R_1}\cdots J_{R_k},J_{R_{k+1}}\cdots J_{R_{2k}})\cdot  \Tr \Psi_{R_1}\cdots \Psi_{R_k}\Tr \Psi_{R_{k+1}}\cdots \Psi_{R_{2k}}\\ &:=V_1+V_2. \end{align*}We can write $J_{R_1}\cdots J_{R_k}=J_{R_1'}^{a_1}\cdots J_{R_l'}^{a_l},\ J_{R_{k+1}}\cdots J_{R_{2k}}=J_{R_1'}^{b_1}\cdots J_{R_l'}^{b_l} $ such that $J_{R_1'},\cdots, J_{R_l'} \in I_n$ are distinct and $a_j,b_j\geq 0$ are integers. Then we have $J_{R_1}\cdots J_{R_{2k}}\\=J_{R_1'}^{a_1+b_1}\cdots J_{R_l'}^{a_l+b_l}.$  Let $ \gamma_j:=\e J_R^j,\ R\in I_n,\ j\geq 0,\ j\in \mathbb{Z}$, then $ \gamma_j$ satisfies $ \gamma_j=(2j-1)!!$ for $j$ even and $ \gamma_j=0$ for $j$ odd,  $ \gamma_j\geq 0$ and $ \gamma_{j+m}\geq \gamma_j\gamma_m.$ 
Thus we have \begin{align*} & \text{cov}(J_{R_1}\cdots J_{R_k},J_{R_{k+1}}\cdots J_{R_{2k}})\\&=\e J_{R_1}\cdots J_{R_{2k}}-\e J_{R_1}\cdots J_{R_{k}}\e J_{R_{k+1}}\cdots J_{R_{2k}}\\&=\e J_{R_1'}^{a_1+b_1}\cdots J_{R_l'}^{a_l+b_l}-\e J_{R_1'}^{a_1}\cdots J_{R_l'}^{a_l}\e J_{R_1'}^{b_1}\cdots J_{R_l'}^{b_l}\\&=\prod_{j=1}^l\gamma_{a_j+b_j}-\prod_{j=1}^l(\gamma_{a_j}\gamma_{b_j})\geq 0 \end{align*}   and\begin{align*}& \text{cov}(J_{R_1}\cdots J_{R_k},J_{R_{k+1}}\cdots J_{R_{2k}})\\& \leq\prod_{j=1}^l\gamma_{a_j+b_j}=\e J_{R_1'}^{a_1+b_1}\cdots J_{R_l'}^{a_l+b_l}=\e J_{R_1}\cdots J_{R_{2k}}.  \end{align*}By \eqref{tr} where $| L_n^{-1} \Tr \Psi_{R_1}\cdots \Psi_{R_k}|\leq 1$ and $ | L_n^{-1} \Tr \Psi_{R_{k+1}}\cdots \Psi_{R_{2k}}|\leq 1,$ we have\begin{align*} |V_2|\leq&{n\choose q_n}^{-k}\sum_{({R_1},...,  {R_{2k}})\in I_n^{2k}\setminus P_2(I_n^{2k})}| \text{cov}(J_{R_1}\cdots J_{R_k},J_{R_{k+1}}\cdots J_{R_{2k}})|\\ \leq&{n\choose q_n}^{-k}\sum_{({R_1},...,  {R_{2k}})\in I_n^{2k}\setminus P_2(I_n^{2k})}\e J_{R_1}\cdots  J_{R_{2k}}\\ =&{n\choose q_n}^{-k}\left(\sum_{({R_1},...,  {R_{2k}})\in I_n^{2k}}\e J_{R_1}\cdots  J_{R_{2k}}-\sum_{({R_1},...,  {R_{2k}})\in  P_2(I_n^{2k})}\e J_{R_1}\cdots  J_{R_{2k}}\right)\\ =&|I_n|^{-k}\left(\e \left(\sum_{R\in I_n}J_{R}\right)^{2k}-|P_2(I_n^{2k})|\right)=\gamma_{2k}-|I_n|^{-k}|P_2(I_n^{2k})|\\=&(2k-1)!!-|I_n|^{-k}{|I_n|\choose k}k!(2k-1)!!=(2k-1)!!\left(1-\prod_{j=0}^{k-1}\frac{|I_n|-j}{|I_n|}\right)\\ \leq &(2k-1)!!\sum_{j=0}^{k-1}\frac{j}{|I_n|}=(2k-1)!!\frac{k(k-1)}{2|I_n|}. \end{align*}Here, we used the fact that $\frac{1}{\sqrt{|I_n|}}\sum\limits_{R\in I_n} J_{R} $ has the standard Gaussian distribution. For the estimate of $V_1$,  we first easily have $$V_1\leq {n\choose q_n}^{-k}{|P_{2,1}(I_n^{2k})|}.$$ In the Gaussian case, following the proof of Lemma \ref{lemmma}  (see  \cite{FTD1} for the detailed proof), we further have the uniform estimate\begin{align*} |P_{2,1}(I_n^{2k})|&=\sum_{0<m\leq k,2|k-m}{|I_n|-m\choose k-m}{k-m\choose \frac{k-m}{2}}(k!)^2/(2^{{k-m}}m!)\cdot|B_m|\\&\leq\sum_{0<m\leq k,2|k-m}\frac{|I_n|^{k-m}}{(k-m)!}2^{{k-m}}(k!)^2/(2^{{k-m}}m!)\cdot|I_n|^{m-1}\\&=|I_n|^{k-1}\sum_{0<m\leq k,2|k-m}(k!){k\choose m}\leq(2^kk!)|I_n|^{k-1}. \end{align*} Using $|I_n|={n\choose q_n}$, we have \begin{align*} &\text{var}[L_n^{-1} \Tr H^k]=V_1+V_2\leq {n\choose q_n}^{-k}{|P_{2,1}(I_n^{2k})|} +V_2\\ \leq& (2^kk!){n\choose q_n}^{-k}|I_n|^{k-1}+(2k-1)!!\frac{k(k-1)}{2|I_n|}= c_k'{n\choose q_n}^{-1},\end{align*}with $c_k'=2^kk!+(2k-1)!!{k(k-1)}/2\leq 2^kk!+(2k)!!{k(k-1)}/2=(2k)!!(1+{k(k-1)}/2)\leq (2k)!!k^2=2^kk!k^2=:c_k.$ This completes the proof.\end{proof}%As a remark, Lemma \ref{lemma6a} is still true if $q_n$ is odd. 
%If $J_{i_1\cdots i_{q_n}}$ are independent, standard Gaussian random variables, by Lemma \ref{lemma6a} we know that Theorem \ref{main3} is still true for a class of analytic functions $f(x)=\sum\limits_{k=0}^{\infty}a_kx^k$ such that $ \sum\limits_{k=0}^{\infty}|a_k|(2^kk!)^{\frac{1}{2}}k<+\infty.$
\subsection{F\'ejer kernel and approximations}
% Given a bounded Lipschitz function $f$, we further assume that $f'$ is bounded uniformly continuous, we will prove that $f_{\la}:=f*K_{\la}$
% approximates $f$ in $L^\infty(\mathbb R)$,  where 
Let $K_{\la}(x)$ be the F\'ejer kernel\begin{align*} &K_{\la}(x)=\frac{1}{2\pi}\int_{-\la}^{\la}\left(1-\frac{|\xi|}{\la}\right)e^{i\xi x}d\xi=\frac{\la}{2\pi}\left(\frac{\sin(\la x/2)}{\la x/2}\right)^2,\,\,\, \la>0.\end{align*}

 \begin{lem}\label{lemKl} $K_{\la}\in C^{\infty}(\R)$ and the detivatives $|K_{\la}^{(k)}(x)|\leq \dfrac{\la^{k+1}}{2\pi}\max(1,\la x/3)^{-2}. $
\end{lem}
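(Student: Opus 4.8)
The plan is to reduce to the case $\la=1$ by scaling and then to bound the Fourier transform of the (dilated) triangle function in two regimes. Set $\phi(t):=K_1(t)=\frac{1}{2\pi}\int_{-1}^{1}(1-|\xi|)e^{i\xi t}\,d\xi=\frac{1}{2\pi}\big(\tfrac{\sin(t/2)}{t/2}\big)^2$. Substituting $\xi\mapsto\la\xi$ in the integral defining $K_\la$ gives $K_\la(x)=\la\,\phi(\la x)$, hence $K_\la^{(k)}(x)=\la^{k+1}\phi^{(k)}(\la x)$, so the lemma is equivalent (using that $\phi$ is even, so $|\phi^{(k)}|$ is even) to the single inequality $|\phi^{(k)}(t)|\le\frac{1}{2\pi}\max(1,|t|/3)^{-2}$. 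Smoothness of $K_\la$ follows from differentiating under the integral sign (the integrand is smooth in $x$, with all $x$-derivatives bounded uniformly on the compact $\xi$-domain), which also gives
$$\phi^{(k)}(t)=\frac{1}{2\pi}\int_{-1}^{1}g_k(\xi)e^{i\xi t}\,d\xi,\qquad g_k(\xi):=(1-|\xi|)(i\xi)^k;$$
alternatively, $\frac{1-\cos u}{u^2}$ is entire, so $K_\la(x)=\frac{1-\cos(\la x)}{\pi\la x^2}$ is real-analytic.

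For the regime $0\le t\le 3$ I would use only the trivial estimate $|\phi^{(k)}(t)|\le\frac{1}{2\pi}\int_{-1}^1(1-|\xi|)|\xi|^k\,d\xi=\frac{1}{\pi(k+1)(k+2)}\le\frac{1}{2\pi}$, which is exactly $\frac{1}{2\pi}\max(1,t/3)^{-2}$ on this range. For the regime $t>3$ I would integrate by parts twice in $\xi$. In the first integration by parts the boundary terms at $\xi=\pm1$ vanish because $g_k(\pm1)=0$; writing $p(\xi)=g_k(\xi)=i^k(\xi^k-\xi^{k+1})$ on $[0,1]$ and $q(\xi)=g_k(\xi)=i^k(\xi^k+\xi^{k+1})$ on $[-1,0]$, a second integration by parts on each subinterval produces boundary contributions at $\xi=0$ equal to $\mp p'(0)\pm q'(0)$, which cancel because $p'(0)=q'(0)$ for every $k\ge1$ (both are $0$ if $k\ge2$ and both equal $i$ if $k=1$; the degenerate case $k=0$ is handled directly from $\phi(t)=\frac{1-\cos t}{\pi t^2}$, giving $|\phi(t)|\le\frac{2}{\pi t^2}$). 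One is then left with
$$\phi^{(k)}(t)=\frac{1}{2\pi t^2}\Big(p'(1)e^{it}-q'(-1)e^{-it}-\int_0^1 p''(\xi)e^{i\xi t}\,d\xi-\int_{-1}^0 q''(\xi)e^{i\xi t}\,d\xi\Big),$$
so $|\phi^{(k)}(t)|\le\frac{1}{2\pi t^2}\big(|p'(1)|+|q'(-1)|+\int_0^1|p''|+\int_{-1}^0|q''|\big)$, with $|p'(1)|=|q'(-1)|=1$.

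The one point that needs care — and which I expect to be the main obstacle — is bounding $\int_0^1|p''|\,d\xi$ by a constant independent of $k$: the naive estimate $|p''(\xi)|\le k(k-1)\xi^{k-2}+k(k+1)\xi^{k-1}$ integrates to $2k+1$, which grows with $k$ and destroys the claimed uniform constant. The fix is to keep the sign cancellation: $p''(\xi)=i^k k\,\xi^{k-2}\big((k-1)-(k+1)\xi\big)$ changes sign only at $\xi_0=\frac{k-1}{k+1}$, and splitting $\int_0^1=\int_0^{\xi_0}+\int_{\xi_0}^1$ gives exactly $\int_0^1|p''|\,d\xi=1+2\xi_0^{\,k-1}\le3$ for $k\ge2$ (and $=2$ for $k=1$); by symmetry $\int_{-1}^0|q''|\,d\xi=\int_0^1|p''|\,d\xi\le3$. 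Hence $|\phi^{(k)}(t)|\le\frac{1}{2\pi t^2}(1+1+3+3)=\frac{8}{2\pi t^2}\le\frac{9}{2\pi t^2}=\frac{1}{2\pi}\max(1,t/3)^{-2}$ for $t>3$.

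Combining the two regimes proves $|\phi^{(k)}(t)|\le\frac{1}{2\pi}\max(1,|t|/3)^{-2}$ for all $t\in\R$, and undoing the scaling via $K_\la^{(k)}(x)=\la^{k+1}\phi^{(k)}(\la x)$ yields $|K_\la^{(k)}(x)|\le\frac{\la^{k+1}}{2\pi}\max(1,\la x/3)^{-2}$, as claimed. The only genuinely delicate estimate is the $k$-uniform bound on $\int_{-1}^1|g_k''|$; everything else (differentiation under the integral, the two integrations by parts, the elementary inequalities on $(k+1)(k+2)$ and on $\xi_0^{k-1}$) is routine.
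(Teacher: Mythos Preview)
Your proof is correct and follows essentially the same route as the paper: differentiate under the integral, use the trivial bound for small argument, and integrate by parts twice for large argument, arriving at the same constant $8$ and hence $8\le 9$. The only differences are cosmetic: you first scale to $\la=1$ and split the integral at $\xi=0$, and you obtain the $k$-uniform bound on $\int_0^1|p''|$ by locating the unique sign change $\xi_0=\frac{k-1}{k+1}$ and computing the total variation exactly as $1+2\xi_0^{k-1}\le 3$, whereas the paper rewrites $k(k-1)\xi^{k-2}-k(k+1)\frac{|\xi|}{\la}\xi^{k-2}=k(k-1)\big(1-\frac{|\xi|}{\la}\big)\xi^{k-2}-2k\frac{|\xi|}{\la}\xi^{k-2}$ and applies the triangle inequality to each summand before integrating.
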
\begin{proof}By definition of $K_{\la}(x)$, we have $K_{\la}\in C^{\infty}(\R)$ and $$K_{\la}^{(k)}(x)=\frac{1}{2\pi}\int_{-\la}^{\la}\left(1-\frac{|\xi|}{\la}\right)(i\xi)^ke^{i\xi x}d\xi$$ for $k\geq 0,k\in\mathbb{Z}^+.$ Therefore,
\begin{align*} &|K_{\la}^{(k)}(x)|\leq \frac{1}{2\pi}\int_{-\la}^{\la}\left(1-\frac{|\xi|}{\la}\right)|\xi|^kd\xi\leq \frac{\la^k}{2\pi}\int_{-\la}^{\la}\left(1-\frac{|\xi|}{\la}\right)d\xi=\dfrac{\la^{k+1}}{2\pi}.\end{align*}On the other hand, if $x\in\R\setminus\{0\}$,  
%\begin{align*} &|K_{\la}^{}(x)|=\frac{\la}{2\pi}\left(\frac{\sin(\la x/2)}{\la x/2}\right)^2\leq \frac{\la}{2\pi}\left(\frac{1}{\la x/2}\right)^2\leq \frac{\la}{2\pi}(\la x/3)^{-2},\end{align*} 
we integrate by parts to get \begin{align*} K_{\la}^{(k)}(x)=&\frac{1}{-2\pi ix}\int_{-\la}^{\la}i^k\left(k\xi^{k-1}-(k+1)\frac{|\xi|}{\la}\xi^{k-1}\right)e^{i\xi x}d\xi\\=&\frac{ i^k}{2\pi( ix)^2}\left(\int_{-\la}^{\la}\left(k(k-1)\xi^{k-2}-k(k+1)\frac{|\xi|}{\la}\xi^{k-2}\right)e^{i\xi x}d\xi+\xi^{k-1}e^{i\xi x}\Big|_{-\la}^{\la}\right),\end{align*}therefore,\begin{align*}  |K_{\la}^{(k)}(x)|\leq&\frac{ 1}{2\pi x^2}\left(\int_{-\la}^{\la}\left|k(k-1)\xi^{k-2}-k(k+1)\frac{|\xi|}{\la}\xi^{k-2}\right|d\xi+2\la^{k-1}\right)\\=&\frac{ 1}{2\pi x^2}\left(\int_{-\la}^{\la}\left|k(k-1)\left(1-\frac{|\xi|}{\la}\right)\xi^{k-2}-2k\frac{|\xi|}{\la}\xi^{k-2}\right|d\xi
+2\la^{k-1}\right)\\ \leq&\frac{ 1}{2\pi x^2}\left(\int_{-\la}^{\la}\left(k(k-1)\left(1-\frac{|\xi|}{\la}\right)|\xi|^{k-2}+2k\frac{|\xi|^{k-1}}{\la}\right)d\xi
+2\la^{k-1}\right)\\=&\frac{ 1}{2\pi x^2}\left(\int_{-\la}^{\la}\left(k(k-1)|\xi|^{k-2}-k(k-3)\frac{|\xi|^{k-1}}{\la}\right)d\xi
+2\la^{k-1}\right)\\=&\frac{ 1}{2\pi x^2}\left(2k\la^{k-1}\chi_{\{k>1\}}-2(k-3)\frac{\la^{k}}{\la}
+2\la^{k-1}\right)\\ \leq&\frac{ 1}{2\pi x^2}\left(2k\la^{k-1}-2(k-3)\la^{k-1}
+2\la^{k-1}\right)=\frac{ 8\la^{k-1}}{2\pi x^2}\leq \dfrac{\la^{k+1}}{2\pi}(\la x/3)^{-2}.\end{align*}Thus we have $|K_{\la}^{(k)}(x)|\leq \dfrac{\la^{k+1}}{2\pi} $ for $x\in \R$ and $ |K_{\la}^{(k)}(x)|\leq \dfrac{\la^{k+1}}{2\pi}(\la x/3)^{-2} $ for $x\in\R\setminus\{0\}.$ This completes the proof.\end{proof}

\begin{lem}\label{lemfl}For $ f\in L^{\infty}(\R)$, we have $f_{\la}:=f*K_{\la}\in C^{\infty}(\R)$ and $|f_{\la}^{(k)}(x)|\leq 2{\la^{k}}\|f\|_{L^{\infty}(\R)},$ thus $f_{\la}$ is real analytic.
\end{lem}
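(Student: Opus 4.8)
The plan is to realize $f_\lambda$ as the convolution $f_\lambda(x)=(f*K_\lambda)(x)=\int_{\R}f(y)\,K_\lambda(x-y)\,dy$ and to differentiate under the integral sign, the needed uniform integrability coming from the decay estimates for $K_\lambda^{(k)}$ in Lemma \ref{lemKl}. First I would justify $f_\lambda\in C^\infty(\R)$ with $f_\lambda^{(k)}=f*K_\lambda^{(k)}$ by induction on $k$: the integrand $y\mapsto f(y)K_\lambda^{(k)}(x-y)$ is, for fixed $x$, a bounded function that decays like $|y|^{-2}$ at infinity (by Lemma \ref{lemKl}, since $|f|\le\|f\|_{L^\infty}$), hence absolutely integrable; and its $x$-derivative $f(y)K_\lambda^{(k+1)}(x-y)$ is, locally uniformly in $x$, dominated by a single integrable function $y\mapsto\|f\|_{L^\infty}\sup_{|x-x_0|\le 1}|K_\lambda^{(k+1)}(x-y)|$, which again decays like $|y|^{-2}$. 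The dominated convergence theorem then legitimizes passing $d/dx$ inside the integral.

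Next I would bound the derivatives. For any $x$,
\[
|f_\lambda^{(k)}(x)|=\Bigl|\int_{\R}f(y)K_\lambda^{(k)}(x-y)\,dy\Bigr|\le\|f\|_{L^\infty(\R)}\,\|K_\lambda^{(k)}\|_{L^1(\R)}.
\]
By Lemma \ref{lemKl}, $|K_\lambda^{(k)}(t)|\le\frac{\lambda^{k+1}}{2\pi}\min\bigl(1,(\lambda t/3)^{-2}\bigr)$ for $t\neq 0$, so the substitution $u=\lambda t/3$ gives
\[
\|K_\lambda^{(k)}\|_{L^1(\R)}\le\frac{\lambda^{k+1}}{2\pi}\cdot\frac{3}{\lambda}\int_{\R}\min(1,u^{-2})\,du=\frac{3\lambda^{k}}{2\pi}\Bigl(2+2\int_1^\infty u^{-2}\,du\Bigr)=\frac{3\lambda^{k}}{2\pi}\cdot 4=\frac{6}{\pi}\lambda^{k}<2\lambda^{k},
\]
which yields the claimed estimate $|f_\lambda^{(k)}(x)|\le 2\lambda^{k}\|f\|_{L^\infty(\R)}$.

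Finally, real analyticity follows from the Lagrange form of the Taylor remainder: for any $x_0,x\in\R$ and any $N$, the error of the degree-$(N-1)$ Taylor polynomial of $f_\lambda$ at $x_0$ is at most $\frac{|x-x_0|^N}{N!}\sup|f_\lambda^{(N)}|\le 2\|f\|_{L^\infty}\frac{(\lambda|x-x_0|)^N}{N!}\to 0$ as $N\to\infty$; hence the Taylor series of $f_\lambda$ at every point converges to $f_\lambda$ on all of $\R$ (indeed $f_\lambda$ extends to an entire function). The only step requiring genuine care is the differentiation under the integral sign when $f$ is merely in $L^\infty$ rather than $L^1$; this is exactly where the $|t|^{-2}$ decay of $K_\lambda^{(k)}$ and its derivatives from Lemma \ref{lemKl} is essential, and once it is in hand the remaining estimates are routine.
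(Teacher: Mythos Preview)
Your proposal is correct and follows essentially the same route as the paper: bound $\|K_\lambda^{(k)}\|_{L^1}$ via the pointwise estimate of Lemma~\ref{lemKl} to get $\|K_\lambda^{(k)}\|_{L^1}\le \frac{6}{\pi}\lambda^k<2\lambda^k$, deduce $f_\lambda^{(k)}=f*K_\lambda^{(k)}$ with the stated bound, and conclude real analyticity from the Lagrange remainder. You supply more detail than the paper on the legitimacy of differentiating under the integral sign, but the argument and the numerical constants are identical.
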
\begin{proof}By Lemma \ref{lemKl}, we have $K_{\la}^{(k)}\in L^1(\R), $ actually\begin{align*}\|K_{\la}^{(k)}\|_{L^{1}(\R)}&\leq \dfrac{\la^{k+1}}{2\pi}\|\max(1,\la x/3)^{-2}\|_{L^{1}(\R)}=\dfrac{\la^{k+1}}{2\pi}\dfrac{3}{\la} \|\max(1, x)^{-2}\|_{L^{1}(\R)}\\&=\dfrac{\la^{k+1}}{2\pi}\dfrac{3}{\la}\cdot4 =\dfrac{12\la^{k}}{2\pi}\leq 2\la^k.\end{align*} 
Since $f_{\la}=f*K_{\la},\ f\in L^{\infty}(\R)$ and $K_{\la}\in C^{\infty}(\R)$, thus $f_{\la}\in C^{\infty}(\R)$ and $f_{\la}^{(k)}=f*K_{\la}^{(k)}.$ Furthermore, 
$$|f_{\la}^{(k)}(x)|\leq \|f\|_{L^{\infty}(\R)}\|K_{\la}^{(k)}\|_{L^{1}(\R)}\leq 2{\la^{k}}\|f\|_{L^{\infty}(\R)}.$$
%Thus $|f_{\la}^{(k)}(x)|\leq \|f\|_{L^{\infty}(\R)}\|K_{\la}^{(k)}\|_{L^{1}(\R)}\leq 2{\la^{k}}\|f\|_{L^{\infty}(\R)}.$
 We can assume $f$ is real valued. By Taylor expansion,  for $n\in \mathbb{Z}^+,$ we have\begin{align*} &f_{\la}(x)=\sum_{k=0}^{n-1}\frac{f_{\la}^{(k)}(0)}{k!}x^k+\frac{f_{\la}^{(n)}(\theta x)}{n!}x^n,\ \ x\in\R,\ \end{align*}here $ \theta=\theta(n,x)\in (0,1).$ Now we have $$ \dfrac{|f_{\la}^{(n)}(\theta x)|}{n!}|x|^n\leq \dfrac{2{\la^{n}}\|f\|_{L^{\infty}(\R)}}{n!}|x|^n=\dfrac{2{(\la|x|)^{n}}\|f\|_{L^{\infty}(\R)}}{n!}.$$ Since $ \lim\limits_{n\to+\infty}\dfrac{{(\la|x|)^{n}}}{n!}=0,$ this implies $ \lim\limits_{n\to+\infty}\dfrac{f_{\la}^{(n)}(\theta x)}{n!}x^n=0$ and\begin{align*} &f_{\la}(x)=\sum_{k=0}^{+\infty}\frac{f_{\la}^{(k)}(0)}{k!}x^k,\ \ x\in\R,\ \end{align*}thus $f_{\la}$ is real analytic. This completes the proof.\end{proof}\begin{lem}\label{lemfl1}If $ f\in L^{\infty}(\R)$ is uniformly continuous, then $ \lim\limits_{\la\to+\infty}\|f-f_{\la}\|_{L^{\infty}(\R)}=0$.
\end{lem}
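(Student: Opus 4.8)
The plan is to run the standard approximate-identity argument: $K_\la$ is a nonnegative kernel of unit mass that concentrates at the origin as $\la\to+\infty$, so convolution against it recovers $f$ uniformly when $f$ is uniformly continuous and bounded. First I would record the two structural facts about the F\'ejer kernel. Nonnegativity is clear from the closed form $K_\la(x)=\frac{\la}{2\pi}(\sin(\la x/2)/(\la x/2))^2\ge 0$. Unit mass follows by the substitution $u=\la x/2$:
\begin{equation*}
\int_{\R}K_\la(x)\,dx=\frac{1}{\pi}\int_{\R}\frac{\sin^2 u}{u^2}\,du=1 .
\end{equation*}
In particular, for $f\in L^{\infty}(\R)$ we get $\|f_\la\|_{L^{\infty}(\R)}\le\|f\|_{L^{\infty}(\R)}$, and, writing $f_\la(x)=\int_{\R}f(x-y)K_\la(y)\,dy$ and subtracting $f(x)=\int_{\R}f(x)K_\la(y)\,dy$,
\begin{equation*}
f(x)-f_\la(x)=\int_{\R}\bigl(f(x)-f(x-y)\bigr)K_\la(y)\,dy .
\end{equation*}

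Next I would fix $\varepsilon>0$. Since $f$ is uniformly continuous there is $\delta>0$ with $|f(x)-f(x-y)|\le\varepsilon$ for \emph{all} $x\in\R$ whenever $|y|\le\delta$. Splitting the last integral over $\{|y|\le\delta\}$ and $\{|y|>\delta\}$ and using $K_\la\ge 0$ together with unit mass,
\begin{equation*}
|f(x)-f_\la(x)|\le\varepsilon\int_{\R}K_\la(y)\,dy+2\|f\|_{L^{\infty}(\R)}\int_{|y|>\delta}K_\la(y)\,dy=\varepsilon+2\|f\|_{L^{\infty}(\R)}\int_{|y|>\delta}K_\la(y)\,dy ,
\end{equation*}
and the right-hand side is independent of $x$, so it bounds $\|f-f_\la\|_{L^{\infty}(\R)}$.

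Finally I would control the tail mass using Lemma \ref{lemKl} with $k=0$: for $|y|>\delta$ and $\la\ge 3/\delta$ we have $\la y/3>1$, hence $K_\la(y)\le\frac{\la}{2\pi}(\la y/3)^{-2}$, so
\begin{equation*}
\int_{|y|>\delta}K_\la(y)\,dy\le\frac{\la}{2\pi}\cdot\frac{9}{\la^{2}}\int_{|y|>\delta}\frac{dy}{y^{2}}=\frac{9}{2\pi\la}\cdot\frac{2}{\delta}=\frac{9}{\pi\la\delta}\longrightarrow 0\quad(\la\to+\infty).
\end{equation*}
Therefore $\limsup_{\la\to+\infty}\|f-f_\la\|_{L^{\infty}(\R)}\le\varepsilon$, and letting $\varepsilon\downarrow 0$ gives the claim. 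I do not expect any genuine obstacle here: this is a textbook approximate-identity estimate, and the only subtlety is that every bound be uniform in $x$ — which is precisely what the \emph{uniform} continuity hypothesis supplies, mere continuity being insufficient.
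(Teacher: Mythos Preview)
Your proof is correct. Both you and the paper run the standard approximate-identity argument, but the mechanics differ slightly: the paper first changes variables $y\mapsto 2y/\la$ to get
\[
\|f-f_\la\|_{L^\infty(\R)}\le \frac{1}{\pi}\int_{\R}\omega\!\left(\frac{2y}{\la}\right)\left(\frac{\sin y}{y}\right)^2 dy,
\]
where $\omega$ is the modulus of continuity of $f$, and then invokes the dominated convergence theorem (dominant $2\|f\|_{L^\infty}(\sin y/y)^2$) to pass to the limit. You instead keep the original variable, do the classical near/far split at scale $\delta$, and control the tail mass $\int_{|y|>\delta}K_\la(y)\,dy$ explicitly via Lemma~\ref{lemKl}. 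Your route gives a quantitative rate $O(\varepsilon + (\la\delta)^{-1})$, while the paper's DCT argument is shorter but non-quantitative; both are entirely standard and equally valid here.
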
\begin{proof}Let $ \omega(a)=\sup\limits_{x\in\R}|f(x)-f(x-a)|$ for $a\in\R,$ then $0\leq \omega(a)\leq 2\|f\|_{L^{\infty}(\R)},\ \omega(a+b)\leq \omega(a)+\omega(b)$ and $\lim\limits_{a\to0}\omega(a)=0$ since $f$ is uniformly continuous, which also implies the continuity of $\omega$. By definition of $f_{\la}$ and the fact that $\int_{\R}K_{\la}(y)dy=1,$we have\begin{align*} &|f(x)-f_{\la}(x)|=\frac{1}{\pi}\left|\int_{\R}\left(f(x)-f\left(x-\frac{2y}{\la}\right)\right)\left(\frac{\sin y}{y}\right)^2 dy\right|\\ \leq &\frac{1}{\pi}\int_{\R}\left|f(x)-f\left(x-\frac{2y}{\la}\right)\right|\left(\frac{\sin y}{y}\right)^2 dy\leq \frac{1}{\pi}\int_{\R}\omega\left(\frac{2y}{\la}\right)\left(\frac{\sin y}{y}\right)^2 dy,\end{align*}thus\begin{align*} &\|f-f_{\la}\|_{L^{\infty}(\R)}\leq \frac{1}{\pi}\int_{\R}\omega\left(\frac{2y}{\la}\right)\left(\frac{\sin y}{y}\right)^2 dy.\end{align*}By dominated convergence theorem,  we  have \begin{align*} &\limsup\limits_{\la\to+\infty}\|f-f_{\la}\|_{L^{\infty}(\R)}\leq \frac{1}{\pi}\lim\limits_{\la\to+\infty}\int_{\R}\omega\left(\frac{2y}{\la}\right)\left(\frac{\sin y}{y}\right)^2dy\\=&\frac{1}{\pi}\int_{\R}\lim\limits_{\la\to+\infty}\omega\left(\frac{2y}{\la}\right)\left(\frac{\sin y}{y}\right)^2 dy=\frac{1}{\pi}\int_{\R}0\left(\frac{\sin y}{y}\right)^2 dy=0,\end{align*}which completes the proof.\end{proof}
\subsection{Proof of Theorem \ref{main3a}}
To prove Theorem \ref{main3a}, we further need two lemmas. 
\begin{lem}\label{lemvar}For the Gaussian SYK model,  let the test function $f:\R\to\R$ be Lipschitz. Then $${n\choose q_n} \text{var}[\mathcal L_n(f)]\leq C\|f'\|_{L^{\infty}(\R)}^2$$
for some universal constant $C$. 
\end{lem}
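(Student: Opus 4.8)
The plan is to bound the variance of $\mathcal L_n(f)$ in terms of $\|f'\|_{L^\infty}$ by first reducing to the case where $f$ is real analytic via the F\'ejer mollification, then expanding the mollified function in its Taylor series and applying the uniform covariance estimate of Lemma \ref{lemma6a}. The key point is that the CLT variance functional $\langle xf'/2,\rho_\infty\rangle$, and indeed every finite-$n$ covariance, only ever sees the \emph{derivative} of the test function, so the constant $\|f\|_{L^\infty}$ plays no role and we must be careful to keep all bounds in terms of $\|f'\|_{L^\infty}$ alone.

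First I would reduce to a function with good decay. Since only $f'$ enters $\text{var}[\mathcal L_n(f)]$ (adding a constant to $f$ does not change $\mathcal L_n(f)-\e\mathcal L_n(f)$), I may replace $f$ by $f-f(0)$ and assume $f(0)=0$; then $|f(x)|\le \|f'\|_{L^\infty}|x|$. To get an honest $L^\infty$ bound I would introduce a truncation: it suffices to control the variance because the eigenvalues of $H$ are almost surely bounded (the operator norm $\|H\|$ is $O(1)$ with overwhelming probability, or one can use the moment bounds $\e\langle x^k,\rho_n\rangle = m_{n,k}^a = O(c^k)$), so that $f$ only matters on a bounded interval. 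Concretely I would fix a compactly supported Lipschitz $g$ agreeing with $f$ on a large interval $[-M,M]$ containing the spectrum with high probability, with $\|g'\|_{L^\infty}\lesssim \|f'\|_{L^\infty}$ and $\|g\|_{L^\infty}\le M\|f'\|_{L^\infty}$; the error $\mathcal L_n(f)-\mathcal L_n(g)$ is supported on the small-probability event $\{\|H\|>M\}$ and contributes negligibly to the variance (using the uniform moment bounds to control $\e|\mathcal L_n(f)-\mathcal L_n(g)|^2$). Then I work with $g$, mollify $g_\la := g*K_\la$, and apply Lemma \ref{lemfl}: $g_\la$ is real analytic with $|g_\la^{(k)}(x)|\le 2\la^k\|g\|_{L^\infty}$, but crucially $g_\la' = g'*K_\la$ so $\|g_\la'\|_{L^\infty}\le \|g'\|_{L^\infty}$ and more generally $g_\la^{(k)} = g'*K_\la^{(k-1)}$ for $k\ge1$, giving $|g_\la^{(k)}(x)|\le \|g'\|_{L^\infty}\|K_\la^{(k-1)}\|_{L^1}\le 2\la^{k-1}\|g'\|_{L^\infty}$.

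Now I expand: write $g_\la(x)=\sum_{k=0}^\infty a_k x^k$ with $a_k = g_\la^{(k)}(0)/k!$, so $|a_k|\le 2\la^{k-1}\|g'\|_{L^\infty}/k!$ for $k\ge1$ (and the $k=0$ term drops out of the variance). By Lemma \ref{lemma6a}, $\text{var}[L_n^{-1}\Tr H^k]\le c_k {n\choose q_n}^{-1}$ with $c_k = 2^k k! k^2$, hence by Cauchy--Schwarz ${n\choose q_n}|\text{cov}(L_n^{-1}\Tr H^k, L_n^{-1}\Tr H^{k'})|\le (c_k c_{k'})^{1/2}$, and so
\begin{align*}
{n\choose q_n}\text{var}[\mathcal L_n(g_\la)] &\le \left(\sum_{k=1}^\infty |a_k| c_k^{1/2}\right)^2 \le \left(\sum_{k=1}^\infty \frac{2\la^{k-1}\|g'\|_{L^\infty}}{k!}\, 2^{k/2}(k!)^{1/2} k\right)^2.
\end{align*}
The series $\sum_k \la^{k-1} 2^{k/2} k /(k!)^{1/2}$ converges for every fixed $\la$, giving a bound of the form $C(\la)\|g'\|_{L^\infty}^2$ — but $C(\la)$ blows up as $\la\to\infty$, which is the wrong direction since we need $g_\la\to g$. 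The fix, and the step I expect to be the main obstacle, is to not pass to the limit $\la\to\infty$ at all: instead choose $\la$ to be a fixed constant (say $\la=1$), accept the approximation error $\|g-g_\la\|_{L^\infty}$ which by Lemma \ref{lemfl1} is $O(\text{modulus of continuity of } g' \text{ at scale }1/\la)\cdot$ -- wait, that bounds $\|g'-g_\la'\|$, not $\|g-g_\la\|$. More carefully: $g-g_\la$ is Lipschitz with $\|(g-g_\la)'\|_{L^\infty}\le \|g'\|_{L^\infty}+\|g_\la'\|_{L^\infty}\le 2\|g'\|_{L^\infty}$, so applying the \emph{already-established} Lipschitz variance bound to $g-g_\la$ would be circular; instead I would use $\text{var}[\mathcal L_n(g)]^{1/2}\le \text{var}[\mathcal L_n(g_\la)]^{1/2}+\text{var}[\mathcal L_n(g-g_\la)]^{1/2}$ and bound the second term crudely by $\e|\mathcal L_n(g-g_\la)-\e\mathcal L_n(g-g_\la)|^2 \le \e\,\mathcal L_n((g-g_\la)^2)+(\e\mathcal L_n(g-g_\la))^2 \le 2\sup_{[-M,M]}|g-g_\la|^2 + (\text{tail})$, and $\sup_{[-M,M]}|g-g_\la|\le \la^{-1}\|g'\|_{L^\infty}\cdot(\text{const})$ by the standard $L^\infty$ estimate for $g-g*K_\la$ when $g$ is Lipschitz (integrate $\omega(2y/\la)\le (2|y|/\la)\|g'\|_{L^\infty}$ against $(\sin y/y)^2$ — but this integral diverges, so one splits at $|y|=\la$, getting $O(\la^{-1}\log\la\,\|g'\|_{L^\infty})$ or uses a truncated kernel). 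Putting $\la$ equal to an absolute constant then yields ${n\choose q_n}\text{var}[\mathcal L_n(g)]\le C\|g'\|_{L^\infty}^2 \le C'\|f'\|_{L^\infty}^2$, which is the claim. The genuinely delicate points are (i) handling the spectral-tail truncation so that $\|g\|_{L^\infty}$ never enters the final constant, and (ii) choosing $\la$ to balance the analytic-expansion bound $C(\la)\|g'\|^2$ against the mollification error, both of which should be done with $\la$ a universal constant rather than a limit.
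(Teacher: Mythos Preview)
Your approach has a genuine gap, and it is precisely the one you flag yourself but do not resolve. After mollifying to $g_\la$ and bounding ${n\choose q_n}\text{var}[\mathcal L_n(g_\la)]\le C(\la)\|g'\|_{L^\infty}^2$ via Lemma~\ref{lemma6a}, you still need to control ${n\choose q_n}\text{var}[\mathcal L_n(g-g_\la)]$. Your ``crude'' bound $\text{var}[\mathcal L_n(g-g_\la)]\le\sup|g-g_\la|^2$ carries \emph{no} factor of ${n\choose q_n}^{-1}$, so after multiplying by ${n\choose q_n}$ it blows up as $n\to\infty$; fixing $\la$ to an absolute constant does not help. Any bound on $\text{var}[\mathcal L_n(h)]$ of order ${n\choose q_n}^{-1}$ for a merely Lipschitz $h$ is exactly the content of the lemma you are trying to prove, so the argument is circular. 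The Taylor/F\'ejer machinery you invoke is what the paper uses \emph{downstream} of this lemma (in Lemma~\ref{lemma12a}), not to establish it.

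The paper's proof is entirely different and much shorter: it exploits the Gaussian structure directly. One checks (Hoffman--Wielandt plus the explicit form of $H$) that the map $(J_R)_{R\in I_n}\mapsto \mathcal L_n(f)$ is Lipschitz on $\R^{|I_n|}$ with constant $L={n\choose q_n}^{-1/2}\|f'\|_{L^\infty}$. Standard Gaussian concentration of measure then gives the subgaussian tail $\mathbb P[|\mathcal L_n(f)-\e\mathcal L_n(f)|>t]\le Ce^{-ct^2/L^2}$, and integrating $2t$ against this tail yields $\text{var}[\mathcal L_n(f)]\le CL^2=C{n\choose q_n}^{-1}\|f'\|_{L^\infty}^2$. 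No mollification, truncation, or series expansion is needed; the Gaussianity of the $J_R$ is used in an essential way through the concentration inequality, which is why this lemma is stated only for the Gaussian SYK model.
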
\begin{proof}%Without lose of generality, we can assume $\|f'\|_{L^{\infty}(\R)}=1$.
% For $x=(J_R)_{R\in I_n}\in \R^{{n\choose q_n}},$ define $H_n(x)$ by \begin{align*}H_n(x):=i^{[q_n/2]}{n\choose q_n}^{-1/2}\sum_{R\in I_n}J_R\Psi_R,
%\end{align*}
 The linear statistic $ \mathcal L_n(f) $ is ${n\choose q_n}^{-1/2}\|f'\|_{L^{\infty}(\R)} $-Lipschitz  if we view $ \mathcal L_n(f) $ as a function of Gaussian vectors $x:=(J_R)_{R\in I^n}\in \mathbb R^{{n\choose q_n}}$ (see part (a) of Lemma 1 in \cite{FTD3}). By the standard   concentration of
measure theorem for the Gaussian vectors (see \cite{AGZ}), 
%\begin{prop}\label{prop1}Let $(Z_k)_{1\leq k\leq N}$ be a standard $N$-dimensional Gaussian random vector, and let $F:\R^{N}\to\R $ be Lipschitz with Lipschitz constant $L.$ There are universal constants $C,c>0$ such that for $t>0,$
%\begin{align*}\mathbb{P}[|F(Z_1,\cdots, Z_N)-\e F(Z_1,\cdots, Z_N)|>t]\leq Ce^{-ct^2/L^2}.
%\end{align*}
%\end{prop}By Proposition \ref{prop1} 
we have\begin{align*}\mathbb{P}[|\mathcal L_n(f)-\e\mathcal L_n(f)|>t]\leq Ce^{-ct^2/L^2},\ t>0,
\end{align*}where $L={n\choose q_n}^{-1/2}\|f'\|_{L^{\infty}(\R)}. $ Therefore, we have 
\begin{align*}\text{var}[\mathcal L_n(f)]&=\e|\mathcal L_n(f)-\e\mathcal L_n(f)|^2=\int_0^{+\infty}2t\mathbb{P}[|\mathcal L_n(f)-\e\mathcal L_n(f)|>t]dt\\ &\leq \int_0^{+\infty}2tCe^{-ct^2/L^2}dt\leq CL^{2} =C\|f'\|_{L^{\infty}(\R)}^2{n\choose q_n}^{-1}.
\end{align*}This completes the proof.\end{proof}\begin{lem}\label{lemma12a}For the Gaussian SYK model, let $f=f_1+f_2$ such that $f_1$ is Lipschitz   and $f'_1$ is bounded uniformly continuous, $f_2$ is a polynomial, $f_1,f_2$ are real valued, then $$  \lim\limits_{n\to+\infty}{n\choose q_n} \text{var}[\mathcal L_n(f)]=2 \langle xf'/2,  \rho_{\infty}\rangle^2.$$
\end{lem}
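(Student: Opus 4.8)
The plan is to reduce Lemma \ref{lemma12a} to the polynomial case already handled by Lemma \ref{lemma12}, using the Fej\'er kernel approximation to replace the Lipschitz part $f_1$ by a sequence of real analytic functions for which we can control the variance. Concretely, set $g_\lambda := (f_1)_\lambda = f_1 * K_\lambda$, which by Lemma \ref{lemfl} is real analytic with $|g_\lambda^{(k)}(x)| \leq 2\lambda^k \|f_1\|_{L^\infty}$; more importantly, since $f_1'$ is bounded uniformly continuous, we have $g_\lambda' = f_1' * K_\lambda$, and Lemma \ref{lemfl1} applied to $f_1'$ gives $\|f_1' - g_\lambda'\|_{L^\infty} \to 0$ as $\lambda \to \infty$. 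Now write $f = f_1 + f_2$, approximate by $f^{(\lambda)} := g_\lambda + f_2$, and split
$$\mathcal L_n(f) - \e\mathcal L_n(f) = \big(\mathcal L_n(f^{(\lambda)}) - \e\mathcal L_n(f^{(\lambda)})\big) + \big(\mathcal L_n(f - f^{(\lambda)}) - \e\mathcal L_n(f - f^{(\lambda)})\big).$$
The remainder $f - f^{(\lambda)} = f_1 - g_\lambda$ is Lipschitz with $\|(f_1 - g_\lambda)'\|_{L^\infty} = \|f_1' - g_\lambda'\|_{L^\infty} \to 0$, so by Lemma \ref{lemvar} we get ${n\choose q_n}\text{var}[\mathcal L_n(f - f^{(\lambda)})] \leq C\|f_1' - g_\lambda'\|_{L^\infty}^2 \to 0$ uniformly in $n$ as $\lambda \to \infty$.

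Next I would handle the main term $\mathcal L_n(f^{(\lambda)})$ for fixed $\lambda$. Since $g_\lambda$ is real analytic with $|g_\lambda^{(k)}(0)| \leq 2\lambda^k\|f_1\|_{L^\infty}$, its Taylor coefficients $a_k = g_\lambda^{(k)}(0)/k!$ satisfy $|a_k| \leq 2\lambda^k \|f_1\|_{L^\infty}/k!$; adding the polynomial $f_2$ only changes finitely many coefficients. Using the Gaussian bound $c_k = 2^k k! k^2$ from Lemma \ref{lemma6a}, one checks $\sum_k |a_k| c_k^{1/2} \leq 2\|f_1\|_{L^\infty}\sum_k \lambda^k 2^{k/2} k^{1/2}/\sqrt{k!} < \infty$ (plus the finite contribution from $f_2$), so $f^{(\lambda)}$ is in the class of Remark \ref{redd}. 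Hence Lemma \ref{lemma12} (via the dominated convergence argument of Remark \ref{redd}) applies and gives
$$\lim_{n\to\infty}{n\choose q_n}\text{var}[\mathcal L_n(f^{(\lambda)})] = 2\langle x (f^{(\lambda)})'/2, \rho_\infty\rangle^2,$$
where I have used $\gamma - 1 = 2$ in the Gaussian case.

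Finally I would combine the two pieces by a standard $\varepsilon$-of-room argument. Writing $\sigma_n(h)^2 := {n\choose q_n}\text{var}[\mathcal L_n(h)]$, the triangle inequality for the $L^2$ norm gives $|\sigma_n(f) - \sigma_n(f^{(\lambda)})| \leq \sigma_n(f - f^{(\lambda)}) \leq C^{1/2}\|f_1' - g_\lambda'\|_{L^\infty}$; taking $n \to \infty$ first (with $\lambda$ fixed) and then $\lambda \to \infty$ shows $\sigma_n(f)^2 \to \lim_\lambda 2\langle x(f^{(\lambda)})'/2, \rho_\infty\rangle^2$, provided that limit exists. For the latter, note $x(f^{(\lambda)})'/2 - xf'/2 = x(g_\lambda' - f_1')/2$, and since $\rho_\infty$ is compactly supported with $\langle |x|, \rho_\infty\rangle < \infty$, we get $|\langle x(f^{(\lambda)})'/2, \rho_\infty\rangle - \langle xf'/2, \rho_\infty\rangle| \leq \tfrac12\langle |x|, \rho_\infty\rangle \|g_\lambda' - f_1'\|_{L^\infty} \to 0$. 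This yields $\lim_{n\to\infty}{n\choose q_n}\text{var}[\mathcal L_n(f)] = 2\langle xf'/2, \rho_\infty\rangle^2$, as claimed. The main obstacle is the uniform-in-$n$ control of the approximation error: everything hinges on Lemma \ref{lemvar}, which is only available in the Gaussian case thanks to Gaussian concentration of measure, and on matching the bound $\|(f_1 - g_\lambda)'\|_{L^\infty}$ (rather than a weaker $L^\infty$ bound on $f_1 - g_\lambda$ itself) — this is precisely why the hypothesis that $f'$ is bounded uniformly continuous, rather than merely $f$ bounded, is needed, via Lemma \ref{lemfl1} applied to the derivative.
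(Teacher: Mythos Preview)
Your overall strategy matches the paper's: approximate via the Fej\'er kernel, control the remainder uniformly in $n$ by Lemma~\ref{lemvar}, verify the analytic approximant lies in the class of Remark~\ref{redd} via the constants $c_k=2^kk!k^2$ of Lemma~\ref{lemma6a}, and then let $\lambda\to\infty$. However, there is a genuine technical gap in your setup. You set $g_\lambda = f_1 * K_\lambda$ and invoke Lemma~\ref{lemfl}, which requires $f_1 \in L^\infty(\R)$; but the hypothesis only says that $f_1$ is Lipschitz with $f_1'$ bounded, so $f_1$ itself may grow linearly and be unbounded (e.g.\ $f_1(x)=x$). In that case $\|f_1\|_{L^\infty}=\infty$, your Taylor-coefficient bound $|a_k|\le 2\lambda^k\|f_1\|_{L^\infty}/k!$ is vacuous, and in fact the convolution $f_1*K_\lambda$ need not even be absolutely convergent since $\int_\R |x|\,K_\lambda(x)\,dx=\infty$. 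The paper fixes this by convolving the \emph{derivative}: with $g:=f_1'\in L^\infty(\R)$ and $g_\lambda:=g*K_\lambda$, it defines $F_\lambda(x)=\int_0^x g_\lambda(y)\,dy+f_2(x)$, so that the coefficient bounds involve $\|f_1'\|_{L^\infty}$ rather than $\|f_1\|_{L^\infty}$, and $(f-F_\lambda)'=g-g_\lambda$ exactly as you need for Lemma~\ref{lemvar}.

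A second, minor point: your claim that $\rho_\infty$ is compactly supported is false when $a=0$ (the limit is Gaussian). The conclusion $\langle |x|,\rho_\infty\rangle<\infty$ you actually use is still correct; the paper obtains it via $\langle|x|,\rho_\infty\rangle\le \langle(1+x^2)/2,\rho_\infty\rangle=1$, valid for all $a\in[0,\infty]$.
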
\begin{proof}Let $g=f_1',\ g_{\la}=g*K_{\la}$ for $\la>0$, by Lemma \ref{lemfl1}, we have $$ \lim\limits_{\la\to+\infty}\|g-g_{\la}\|_{L^{\infty}(\R)}=0.$$ By Lemma \ref{lemfl}, $ g_{\la}$ is real analytic and $$|g_{\la}^{(k)}(x)|\leq 2{\la^{k}}\|g\|_{L^{\infty}(\R)}=2{\la^{k}}\|f_1'\|_{L^{\infty}(\R)}.$$ As $f_2$ is a polynomial, we can write $f_2(x)=\sum\limits_{k=0}^ma_kx^k.$ Let $F_{\la}(x)=\int_0^xg_{\la}(y)dy+f_2(x),$ by Taylor expansion we have \begin{align*} &F_{\la}(x)=\sum_{k=0}^{+\infty}\frac{g_{\la}^{(k)}(0)}{(k+1)!}x^{k+1}+\sum\limits_{k=0}^ma_kx^k:=\sum_{k=0}^{+\infty}b_kx^k, \end{align*}with $b_0=a_0,$ $b_k=\dfrac{g_{\la}^{(k-1)}(0)}{k!}+a_k$ for $0<k\leq m$ and $b_k=\dfrac{g_{\la}^{(k-1)}(0)}{k!}$ for $k> m.$ For $c_k=2^kk!k^2$ as in Lemma \ref{lemma6a}, we have\begin{align*}& \sum\limits_{k=0}^{\infty}|b_k|c_k^{\frac{1}{2}}\leq \sum\limits_{k=0}^{m}|a_k|c_k^{\frac{1}{2}}+\sum\limits_{k=1}^{+\infty}\dfrac{|g_{\la}^{(k-1)}(0)|}{k!}(2^k k!)^{\frac{1}{2}}k\\ &\leq \sum\limits_{k=0}^{m}|a_k|c_k^{\frac{1}{2}}+\sum\limits_{k=1}^{+\infty}2{\la^{k-1}}\|f_1'\|_{L^{\infty}(\R)}(2^k /k!)^{\frac{1}{2}}k<+\infty. \end{align*}Since $\mathcal L_n(F_{\la})-b_0=\sum\limits_{k=1}^\infty b_kL_n^{-1} \Tr H^k, $ we have\begin{align*}{n\choose q_n}\text{var}[\mathcal L_n(F_{\la})]={n\choose q_n}\sum\limits_{k=1}^{+\infty}\sum\limits_{k'=1}^{+\infty}b_kb_{k'}\text{cov}(L_n^{-1} \Tr H^k,L_n^{-1} \Tr H^{k'}).
\end{align*}
By Lemma \ref{lemma6a}, for the Gaussian case, we will have $${n\choose q_n}|\text{cov}(L_n^{-1} \Tr H^k,L_n^{-1} \Tr H^{k'})|\leq (c_kc_{k'})^{\frac{1}{2}},$$ where $c_k=2^kk!k^2.$ We notice that\begin{align*}\sum\limits_{k=1}^{+\infty}\sum\limits_{k'=1}^{+\infty}|b_kb_{k'}|(c_kc_{k'})^{\frac{1}{2}}=
\left(\sum\limits_{k=0}^{\infty}|b_k|c_k^{\frac{1}{2}}\right)^2<+\infty.
\end{align*} Therefore,  by the dominated convergence theorem and Lemma \ref{lemma11}, we have\begin{align*}&\lim\limits_{n\to+\infty}{n\choose q_n}\text{var}[\mathcal L_n(F_{\la})]=\lim\limits_{n\to+\infty}{n\choose q_n}\sum\limits_{k=1}^{+\infty}\sum\limits_{k'=1}^{+\infty}b_kb_{k'}\text{cov}(L_n^{-1} \Tr H^k,L_n^{-1} \Tr H^{k'})\\&=2\sum\limits_{k=1}^{+\infty}\sum\limits_{k'=1}^{+\infty}b_kb_{k'}(m^a_kk/2)(m^a_{k'}k'/2) 
=2\left(\sum\limits_{k=1}^{+\infty}b_km^a_kk/2\right)^2 \\&=2\left(\sum\limits_{k=1}^{+\infty}b_k\langle x^k,  \rho_{\infty}\rangle k/2\right)^2 =2\left\langle \sum\limits_{k=1}^{+\infty}b_kx^kk/2,  \rho_{\infty}\right\rangle^2 .
\end{align*}  Since $xF_{\la}'(x)/2=\sum\limits_{k=1}^{+\infty}b_kx^kk/2$, we finally have \begin{align}\label{v1}\lim\limits_{n\to+\infty}{n\choose q_n}\text{var}[\mathcal L_n(F_{\la})]=2\left\langle xF_{\la}'/2,  \rho_{\infty}\right\rangle^2.
\end{align}Since $(f-F_{\la})'=f'-F_{\la}'=f_1'+f_2'-(g_{\la}+f_2')=g-g_{\la}, $ by Lemma \ref{lemvar}, we have$${n\choose q_n} \text{var}[\mathcal L_n(f-F_{\la})]\leq C\|(f-F_{\la})'\|_{L^{\infty}(\R)}^2=C\|g-g_{\la}\|_{L^{\infty}(\R)}^2.$$ If we combine this with $$|(\text{var}[\mathcal L_n(f)])^{\frac{1}{2}}-(\text{var}[\mathcal L_n(F_{\la})])^{\frac{1}{2}}|\leq(\text{var}[\mathcal L_n(f-F_{\la})])^{\frac{1}{2}} ,$$ 
 we have, 
\begin{align*}{n\choose q_n}^{\frac{1}{2}}(\text{var}[\mathcal L_n(F_{\la})])^{\frac{1}{2}}-C\|g-g_{\la}\|_{L^{\infty}(\R)}\leq{n\choose q_n}^{\frac{1}{2}}(\text{var}[\mathcal L_n(f)])^{\frac{1}{2}}\\ \leq{n\choose q_n}^{\frac{1}{2}}(\text{var}[\mathcal L_n(F_{\la})])^{\frac{1}{2}}+C\|g-g_{\la}\|_{L^{\infty}(\R)}.
\end{align*}Taking limit, we have\begin{align*}\lim\limits_{n\to+\infty}{n\choose q_n}^{\frac{1}{2}}(\text{var}[\mathcal L_n(F_{\la})])^{\frac{1}{2}}-C\|g-g_{\la}\|_{L^{\infty}(\R)}\leq\liminf\limits_{n\to+\infty}{n\choose q_n}^{\frac{1}{2}}(\text{var}[\mathcal L_n(f)])^{\frac{1}{2}}\\ \leq\limsup\limits_{n\to+\infty}{n\choose q_n}^{\frac{1}{2}}(\text{var}[\mathcal L_n(f)])^{\frac{1}{2}}\leq\lim\limits_{n\to+\infty}{n\choose q_n}^{\frac{1}{2}}(\text{var}[\mathcal L_n(F_{\la})])^{\frac{1}{2}}+C\|g-g_{\la}\|_{L^{\infty}(\R)}.
\end{align*}Notice that (using \eqref{v1})\begin{align*}\lim\limits_{n\to+\infty}{n\choose q_n}^{\frac{1}{2}}(\text{var}[\mathcal L_n(F_{\la})])^{\frac{1}{2}}=2^{\frac{1}{2}}|\left\langle xF_{\la}'/2,  \rho_{\infty}\right\rangle|,
\end{align*}and that\begin{align*}\left||\left\langle xf'/2,  \rho_{\infty}\right\rangle|-|\left\langle xF_{\la}'/2,  \rho_{\infty}\right\rangle|\right|\leq|\left\langle x(f-F_{\la})'/2,  \rho_{\infty}\right\rangle|\\ \leq \|(f-F_{\la})'\|_{L^{\infty}(\R)}\left\langle |x|,  \rho_{\infty}\right\rangle/2\leq \|g-g_{\la}\|_{L^{\infty}(\R)}/2,
\end{align*}(here, we use the fact that $\left\langle 1,  \rho_{\infty}\right\rangle=\left\langle x^2,  \rho_{\infty}\right\rangle=1,\ \left\langle |x|,  \rho_{\infty}\right\rangle\leq \left\langle (1+x^2)/2,  \rho_{\infty}\right\rangle=1$ for the limiting densities in all cases) we will have\begin{align*}2^{\frac{1}{2}}|\left\langle xf'/2,  \rho_{\infty}\right\rangle|-C\|g-g_{\la}\|_{L^{\infty}(\R)}\leq\liminf\limits_{n\to+\infty}{n\choose q_n}^{\frac{1}{2}}(\text{var}[\mathcal L_n(f)])^{\frac{1}{2}}\\ \leq\limsup\limits_{n\to+\infty}{n\choose q_n}^{\frac{1}{2}}(\text{var}[\mathcal L_n(f)])^{\frac{1}{2}}\leq2^{\frac{1}{2}}|\left\langle xf'/2,  \rho_{\infty}\right\rangle|+C\|g-g_{\la}\|_{L^{\infty}(\R)}.
\end{align*}Since $ \lim\limits_{\la\to+\infty}\|g-g_{\la}\|_{L^{\infty}(\R)}=0$, letting $\la\to+\infty$, we finally have\begin{align*}\lim\limits_{n\to+\infty}{n\choose q_n}^{\frac{1}{2}}(\text{var}[\mathcal L_n(f)])^{\frac{1}{2}}=2^{\frac{1}{2}}|\left\langle xf'/2,  \rho_{\infty}\right\rangle|,
\end{align*} %and \begin{align*}\lim\limits_{n\to+\infty}{n\choose q_n}(\text{var}[\mathcal L_n(f)])=2\left\langle xf'/2,  \rho_{\infty}\right\rangle^2.
%\end{align*}
which completes the proof.\end{proof}

Now we can finish the proof of Theorem \ref{main3a}. Let $f(x)$ be Lipschitz and $f'(x)$ is bounded uniformly continuous. Let's denote $f_1:=f$ and $f_2:=-A x^2$ with $A:=\langle xf'/2,  \rho_{\infty}\rangle$ where it's easy to see  $|A|< \infty$. Then we have $ \langle x[f_1+f_2]'/2,  \rho_{\infty}\rangle=\langle xf'/2,  \rho_{\infty}\rangle-A\langle x^2,  \rho_{\infty}\rangle=\langle xf'/2,  \rho_{\infty}\rangle-A =0.$   Thus, by Lemma \ref{lemma12a}, we have\begin{align*}\lim\limits_{n\to+\infty}{n\choose q_n}\text{var}[\langle [f_1+f_2],   \rho_n\rangle]=2\left\langle x[f_1+f_2]'/2,  \rho_{\infty}\right\rangle^2=0.
\end{align*} Therefore,  we have $${n\choose q_n}^{\frac{1}{2}}\left (\langle [f_1+f_2],  \rho_n\rangle-\e[\langle [f_1+f_2],  \rho_n\rangle]\right)\to 0 $$ in probability. By definitions of $f, f_1, f_2$ and $\mu$ above,  we have \begin{align*}&{n\choose q_n}^{\frac{1}{2}} (\langle f,   \rho_n\rangle-\e[\langle f,   \rho_n\rangle])\\&={n\choose q_n}^{\frac{1}{2}} (\langle [f_1+f_2],   \rho_n\rangle-\e[\langle [f_1+f_2], \rho_n\rangle])+ A{n\choose q_n}^{\frac{1}{2}} (\langle x^2,   \rho_n\rangle-\e[\langle x^2, \rho_n\rangle]).\end{align*}  The first term tends to 0 in probability and Theorem \ref{main3} implies that the second term tends to $ A J$ in distribution where $J$ is the Gaussian distribution with mean 0 and variance 2, therefore, we have
%  $\\  {n\choose q_n}^{\frac{1}{2}} (\langle x^2,  \rho_n(\lambda)\rangle-\e[\langle x^2,   \rho_n(\lambda)\rangle])\Rightarrow J,$  and

$${n\choose q_n}^{\frac{1}{2}} (\langle f,  \rho_n\rangle-\e[\langle f,  \rho_n\rangle])\Rightarrow  A J,$$
which completes the proof of Theorem \ref{main3a}.
%\subsection{Proof of Berry-Esseen Theorem}
%\begin{thm}If $J_{i_1\cdots i_{q_n}}$ are independent, standard Gaussian random variables, by Lemma \ref{lemma12a} we know that Theorem \ref{main3} is still true for $f:\R\to\R$ Lipschitz such that $f'$ is bounded uniformly continuous.\end{thm}

\end{document}